\def\moverlay{\mathpalette\mov@rlay}
\def\mov@rlay#1#2{\leavevmode\vtop{%
   \baselineskip\z@skip \lineskiplimit-\maxdimen
   \ialign{\hfil$\m@th#1##$\hfil\cr#2\crcr}}}
\newcommand{\charfusion}[3][\mathord]{
    #1{\ifx#1\mathop\vphantom{#2}\fi
        \mathpalette\mov@rlay{#2\cr#3}
      }
    \ifx#1\mathop\expandafter\displaylimits\fi}
\newcommand{\eps}{\varepsilon}
\DeclareFontFamily{OMX}{MnSymbolE}{}
\DeclareSymbolFont{MnLargeSymbols}{OMX}{MnSymbolE}{m}{n}
\DeclareFontShape{OMX}{MnSymbolE}{m}{n}{
    <-6>  MnSymbolE5
   <6-7>  MnSymbolE6
   <7-8>  MnSymbolE7
   <8-9>  MnSymbolE8
   <9-10> MnSymbolE9
  <10-12> MnSymbolE10
  <12->   MnSymbolE12
}{}
\DeclareFontShape{OMX}{MnSymbolE}{b}{n}{
    <-6>  MnSymbolE-Bold5
   <6-7>  MnSymbolE-Bold6
   <7-8>  MnSymbolE-Bold7
   <8-9>  MnSymbolE-Bold8
   <9-10> MnSymbolE-Bold9
  <10-12> MnSymbolE-Bold10
  <12->   MnSymbolE-Bold12
}{}
\let\llangle\@undefined
\let\rrangle\@undefined
\DeclareMathDelimiter{\llangle}{\mathopen}%
                     {MnLargeSymbols}{'164}{MnLargeSymbols}{'164}
\DeclareMathDelimiter{\rrangle}{\mathclose}%
                     {MnLargeSymbols}{'171}{MnLargeSymbols}{'171}
\newcommand{\revs}[2]{\mathrm{RevS}(#1,#2)}
\newcommand{\revlc}[1]{\mathrm{RevL}(#1)}
\newcommand{\revl}[2]{\mathrm{RevL}(#1,#2)}
\newcommand{\orevs}[2]{\mathrm{1RevS}(#1,#2)}
\newcommand{\orevl}[2]{\mathrm{1RevL}(#1,#2)}
\newcommand{\supp}{\mathrm{supp}}
\begin{document}
\title{Reversible Weighted Automata over Finite Rings \\ and Monoids with Commuting Idempotents}
 
\titlerunning{Reversible Weighted Automata over Finite Rings}
%
\author{Peter Kostol\'anyi\orcidID{0000-0002-5474-8781} \and Andrej Ravinger\orcidID{0009-0007-4725-2597}}  
\authorrunning{P. Kostol\'anyi and A. Ravinger}
%
\institute{Department of Computer Science, Comenius University in Bratislava \\
Mlynsk\'a dolina, 842 48 Bratislava, Slovakia \\
\email{\{kostolanyi,andrej.ravinger\}@fmph.uniba.sk}}
\maketitle              
\begin{abstract}
Reversible weighted automata are introduced and~considered in a specific setting where the weights are taken from a nontrivial locally finite commutative ring such as a finite field. 
It~is shown that the supports of series realised by such automata are precisely the rational languages such that the~idempotents in their syntactic monoids commute.
In particular, this is true for reversible weighted automata over the~finite field~$\mathbb{F}_2$, where the realised series can be directly identified with such languages.
A~new automata-theoretic characterisation is thus obtained for the variety of~rational languages corresponding to the pseudovariety of finite monoids~$\mathbf{ECom}$, which also forms the Boolean closure
of the reversible languages in the sense of J.-\'E. Pin. The problem of determining whether a rational series over a locally finite commutative ring can be realised by~a~reversible weighted automaton is decidable as a consequence.
\keywords{Weighted automaton \and Reversible Automaton \and Finite Field \and Variety of Languages \and Decidability}
\end{abstract}

\section{Introduction}

This article embarks upon the study of \emph{reversibility} in~the~context of~\emph{weighted automata}. More precisely, we focus here on~a~very special case of weighted automata over \emph{finite commutative rings},
including in particular the two-element Galois field~$\mathbb{F}_2$. Strong connections to language theory turn out to arise.

The study of the concept of reversibility in computing goes back to the seminal work of~R.~\mbox{Landauer}~\cite{landauer1961a}: according to his fundamental thermodynamical principle,
any loss of information that takes place during a computation necessarily leads to some minimal amount of heat dissipation. This observation led C.~H.~Bennett~\cite{bennett1973a} to consider
logical reversibility of computations, and in particular the \emph{reversible Turing machines}, in which any configuration has at most one preceding and~at~most one successor configuration. 
Similar ideas gradually inspired the~development of~an~entire field of~reversible computing.\vfill\newpage

Starting by D.~Angluin~\cite{angluin1982a} defining -- in the context of language inference -- what is now usually called \emph{bideterministic finite automata}~\cite{pin1992a,tamm2008a,tamm2003a,tamm2004a}, many different variants
of \emph{reversible finite automata} have been introduced and~extensively studied throughout the~literature~\cite{ambainis1998a,golovkins2006a,golovkins2010a,holzer2015a,holzer2018a,holzer2017a,lombardy2002a,pin1992a,radionova2024a,radionova2025a}, while some generalisations of the usual concepts of reversibility in finite automata have been considered as~well~\cite{axelsen2016a,axelsen2017a,garcia2009a,guillon2022a}.
What all these models have in common is that unlike in~the~case of reversible Turing machines, reversibility turns out to be a~real restriction for~finite automata and leads to a loss in their expressivity.
Hence the~study of~the~corresponding classes of languages has been a substantial part of~the~aforementioned efforts.
                                                                                                                   
The notion of a \emph{reversible finite automaton} considered in this article is the one of~J.-\'E. Pin~\cite{pin1992a} -- in other words, a finite automaton is \emph{reversible} if its transition relation is both deterministic
and codeterministic; however, the automaton is allowed to have more than one initial as well as more than one final state. Thus, using the terminology usual in weighted automata theory, we may say that a~reversible finite automaton
is a nondeterministic finite automaton that is finitely sequential~\cite{bala2013a,kostolanyi2022c,kostolanyi2024a,paul2019a,paul2020a,paul2021a} (i.e., ``deterministic'' with possibly more than one initial state) and~its transpose is finitely sequential as well.
This definition is not only very natural, but also leads to a well-behaved class of \emph{reversible languages}, which forms a \emph{positive variety}~\cite{pin1992a}; the~corresponding pseudovariety of~ordered monoids
can be captured by the pseudoinequalities $x^{\omega} y^{\omega} = y^{\omega} x^{\omega}$ and~$x^{\omega} \leq 1$, which also give rise to a natural characterisation of reversible languages in terms of forbidden configurations
in their minimal deterministic automata \cite{pin1992a,klima2018a,klima2020b}. Reversible finite automata with at most one initial and~at~most one final state are precisely the \emph{bideterministic finite automata}~\cite{pin1992a,tamm2008a,tamm2003a,tamm2004a}.
\smallskip

In this article, we lift the notion of reversibility in the sense of~J.-\'E. Pin~\cite{pin1992a} to~the~setting of~\emph{weighted finite automata} -- i.e., finite automata with transitions carrying weights
taken from some algebra such as a semiring, encompassing a~necessary shift from recognising languages towards realisation of~\emph{\mbox{formal} power series} in~several noncommuting variables \cite{berstel2011a,droste2009a,droste2021a,sakarovitch2009a,salomaa1978a}.  
While the~very origins of~weighted automata theory go back already to the seminal article of~\mbox{M.-P.}~\mbox{Sch\"utzenberger}~\cite{schutzenberger1961a}, the field has experienced a wave of renewed interest during the~last decades, and~still represents a~very active area of~study~\mbox{\cite{droste2009a,droste2021a}}.
It is thus relatively surprising that reversible weighted automata have attracted almost no attention so far, the~only exception being the~recent research on~bideterministic weighted automata \cite{kostolanyi2022b,kostolanyi2023c}. 

The study of reversible weighted automata would not only be natural in~view of~the~previous research on~reversibility in~automata theory, but it can also be motivated by the~study of~certain \emph{decision problems} for weighted automata.
The~\mbox{\emph{determinisability}} problem, in which one asks about the~existence of~a~deterministic -- or sequential~\cite{lombardy2006a} -- equivalent of~a~given weighted automaton, attracted significant attention in recent years \cite{bell2023a,benalioua2024a,jecker2024a,kostolanyi2022a}.
While the problem was proved to be decidable in~important settings such as over fields~\cite{bell2023a} and~some complexity bounds have been obtained as well~\cite{benalioua2024a,jecker2024a}, one still has no efficient algorithms for deciding the~determinisability problem for sufficiently general classes of weighted automata.
On the other hand, it has been shown that one can do better when deciding the~existence of~a~\emph{bideterministic} equivalent for~a~given weighted automaton~\cite{kostolanyi2022b,kostolanyi2023c}. This indicates that it might make sense to also look at some other
restrictions of weighted automata related to determinism, and in particular at the \emph{reversible weighted automata}, which represent both a~\mbox{natural} generalisation of~bideterministic weighted automata, as~well~as a~class of finitely sequential automata that is in general incomparable with the~deterministic weighted automata when it comes to expressivity.

The aim of this article is to initiate a systematic research on reversible weighted automata, while we mostly focus here on~a~very special case, in which the weights are from a \emph{locally finite commutative ring}.
This includes the~case of~weighted automata and formal power series over \emph{finite fields}, and~in~particular over the~\emph{\mbox{two-element} field $\mathbb{F}_2$}. The latter setting gives rise to a~natural
way of~describing languages, as any rational series over $\mathbb{F}_2$ is at~the~same time a~characteristic series of~some rational language -- in fact, the series over $\mathbb{F}_2$ correspond to what has been studied as \emph{formal languages over $\mathrm{GF}(2)$} by~E.~\mbox{Bakinova}~et~al.~\cite{bakinova2022a,okhotin2024a};
the idea of describing languages using weighted automata over fields also appears in~the~study of~\emph{image-binary automata} of S.~Kiefer and~C.~\mbox{Widdershoven}~\mbox{\cite{kiefer2021a,kiefer2024a}}. 
In~a~similar spirit, a~weighted automaton over a~\emph{finite} or \emph{locally finite} (semi)ring gives rise to a~\emph{rational} language by~taking the~support of its behaviour, i.e., the~language of~all words with 
a~nonzero coefficient in~the~realised series. 
\smallskip
  
We show that regardless of a \emph{nontrivial locally finite commutative ring} $R$ considered, the languages described by the~\emph{reversible weighted automata} over $R$ in this way always form the same class, 
namely the variety of languages corresponding to the pseudovariety $\mathbf{ECom}$ of all finite monoids with commuting idempotents. This is at~the~same time 
the Boolean closure of the positive variety of all reversible languages~\cite{pin1992a}, i.e., the variety generated by such languages.

Note that the result applies in particular to reversible weighted automata over $\mathbb{F}_2$ -- the languages described by such automata are precisely the languages from the aforementioned variety.
This means that interpreting a reversible finite automaton as a reversible weighted automaton over $\mathbb{F}_2$ leads to an increase in~the~expressive power of~the~model
and to better closure properties of~the~corresponding class of languages. At the same time, a~new automata-theoretic characterisation of the variety of languages corresponding to $\mathbf{ECom}$ is obtained.      

We also characterise the \emph{series} realised by the reversible weighted automata over nontrivial \emph{finite} commutative rings. This characterisation implies, together with effective decidability of~membership of a~monoid
to $\mathbf{ECom}$, the~existence of~an~algorithm for~deciding whether a weighted automaton over an~effective \emph{locally finite} commutative ring~$R$ admits a~reversible equivalent over $R$ or not.               

\section{Preliminaries} 
                                                  
We denote by $\mathbb{N}$ and $\mathbb{Q}$ the sets of all \emph{nonnegative} integers and rational numbers, and by $\mathbb{B}$ the Boolean domain $\mathbb{B} = \{0,1\}$.
We write $[n] = \{1,\ldots,n\}$ for all $n \in \mathbb{N}$. Alphabets are assumed to be finite and nonempty, $\eps$ denotes the empty word.\goodbreak

A \emph{semiring} is a quintuple $(S,+,\cdot,0,1)$, or simply $S$, such that $(S,+,0)$ is a commutative monoid, $(S,\cdot,1)$ is a monoid,
$\cdot$ distributes over $+$ from both sides, and $0$ is a zero in $(S,\cdot,1)$. A \emph{subsemiring} of $(S,+,\cdot,0,1)$ is a semiring $(T,+_T,\cdot_T,0,1)$, where $T \subseteq S$ and $+_T,\cdot_T$ are the restrictions of $+$ and~$\cdot$~to~$T$. 
The~\emph{subsemiring of $S$ generated by $G \subseteq S$} is the smallest subsemiring $\langle G\rangle$ of~$S$ such that $G \subseteq \langle G\rangle$. A semiring $(S,+,\cdot,0,1)$ is \emph{commutative} when $\cdot$ is, \emph{finite} when~$S$~is,
\emph{finitely generated} when $S = \langle G\rangle$ for some finite $G \subseteq S$, and \emph{locally finite} when all finitely generated subsemirings of $S$ are finite.
A semiring $(S,+,\cdot,0,1)$ is \emph{nontrivial} if~$S$ contains at least two elements, that is, if \mbox{$0 \neq 1$}.
An~important example of a semiring is the \emph{Boolean semiring} $(\mathbb{B},\lor,\land,0,1)$.

A \emph{ring} (with unity) is a semiring $(R,+,\cdot,0,1)$ such that $(R,+,0)$ is an abelian group. 
A \emph{field} is a nontrivial commutative ring $(\mathbb{F},+,\cdot,0,1)$ such that $(\mathbb{F} \setminus \{0\},\cdot,1)$ is an abelian group. 
The~finite field with two elements is denoted by $\mathbb{F}_2$.

The reader can consult \cite{berstel2011a,droste2009a,droste2021a,sakarovitch2009a,salomaa1978a} for the basics on formal power series in several noncommuting variables and weighted automata.
A~\emph{formal power series} over a semiring $S$ and alphabet $\Sigma$ is a mapping $r\colon\Sigma^* \to S$; the value of~$r$ upon $w \in \Sigma^*$ is denoted by~$(r,w)$ and~called
the \emph{coefficient} of $r$ at $w$, while one writes
$r = \sum_{w \in \Sigma^*} (r,w)\,w$.
The~set of~all series over $S$ and $\Sigma$ is denoted by~$S\llangle\Sigma^*\rrangle$. Given $r,s \in S\llangle\Sigma^*\rrangle$, we define $r + s$ by~$(r + s, w) = (r,w) + (s,w)$ and $r \cdot s$ by~$(r\cdot s,w) = \sum_{u,v \in \Sigma^*, uv = w} (r,u)(s,v)$ for all $w \in \Sigma^*$.
Each $a \in S$ is identified with $r_a \in S\llangle\Sigma^*\rrangle$ such that $(r_a,\eps) = a$ and~$(r_a,w) = 0$ for~all $w \in \Sigma^+$; the~left or right multiplication by~$a$ thus corresponds to left or right \emph{scalar multiplication}.
Similarly, each $w \in \Sigma^*$ is identified with $r_w \in S\llangle\Sigma^*\rrangle$ such that $(r_w,w) = 1$ and~$(r_w,x) = 0$ for~all $x \in \Sigma^* \setminus \{w\}$.

The \emph{support} of $r \in S\llangle\Sigma^*\rrangle$ is a language $\supp(r) = \{w \in \Sigma^* \mid (r,w) \neq 0\}$, and~the~\emph{characteristic series} $\underline{L} \in S\llangle\Sigma^*\rrangle$ of~a~language $L \subseteq \Sigma^*$ over a~nontrivial semiring $S$ is
given by $(\underline{L},w) = 1$ for $w \in L$ and $(\underline{L},w) = 0$ for $w \in \Sigma^* \setminus L$.

A \emph{nondeterministic finite automaton} over an alphabet $\Sigma$ is a quadruple \mbox{$\mathcal{A} = (Q,\rightarrow,I,F)$}, where $Q$ is a~finite set of~states, ${\rightarrow} \subseteq Q \times \Sigma \times Q$ is a transition relation, and $I,F \subseteq Q$ are
the sets of initial and final states, respectively. Given $p,q \in Q$ and $w \in \Sigma^*$ such that $w = a_1 \ldots a_n$ for some $n \in \mathbb{N}$ and $a_1,\ldots,a_n \in \Sigma$, we write $p \stackrel{w}{\rightarrow} q$ if there are states $p_0,\ldots,p_n \in Q$ such that $p_0 = p$, $p_n = q$, and~$(p_{k - 1},a_k,p_k) \in {\rightarrow}$ for $k = 1,\ldots,n$. 
The \emph{language} described by $\mathcal{A}$ is given by $\|\mathcal{A}\| = \{w \in \Sigma^* \mid \exists p \in I~\exists q \in F: p \stackrel{w}{\rightarrow} q\}$.
The automaton $\mathcal{A}$ is \emph{deterministic} if $\lvert I\rvert = 1$ and $q = q'$ whenever $p \stackrel{a}{\rightarrow} q$ and $p \stackrel{a}{\rightarrow} q'$ for some $p,q,q' \in Q$ and $a \in \Sigma$.

A \emph{deterministic finite automaton} with a complete transition function over $\Sigma$ can also be written as $\mathcal{A} = (Q,\cdot,i,F)$, where $\cdot\colon Q \times \Sigma^* \to Q$ is a~right action of~$\Sigma^*$ on~$Q$, where $i \in Q$ is the~initial state, and~$F \subseteq Q$ is a~set of~final states.
The~\emph{language} recognised by~$\mathcal{A}$ is given by~$\|\mathcal{A}\| = \{w \in \Sigma^* \mid i \cdot w \in F\}$.\goodbreak 
 
A \emph{weighted automaton} $\mathcal{A} = (Q,\sigma,\iota,\tau)$ over a semiring $S$ and alphabet $\Sigma$ is given by a finite set of states $Q$, a transition weighting function $\sigma\colon Q \times \Sigma \times Q \to S$,
and functions $\iota\colon Q \to S$, $\tau\colon Q \to S$ assigning initial and final weights to states. A \emph{run} of $\mathcal{A}$ is a word $\gamma = q_0 a_1 q_1 a_2 q_2 \ldots q_{t-1} a_t q_t$ with $t \in \mathbb{N}$, $q_0,\ldots,q_t \in Q$, and $a_1,\ldots,a_t \in \Sigma$
such that $\sigma(q_{k - 1},a_k,q_k) \neq 0$ for $k = 1,\ldots,t$; we say that $\gamma$ is a run on $w = a_1 \ldots a_t$ leading from $q_0$ to $q_t$. Given any such run $\gamma$, we set $\sigma(\gamma) = \sigma(q_0,a_1,q_1) \ldots \sigma(q_{t-1},a_t,q_t)$ and $\overline{\sigma}(\gamma) = \iota(q_0)\sigma(\gamma)\tau(q_t)$.
Let $\mathcal{R}(\mathcal{A},w)$ be the set of all runs of $\mathcal{A}$ on $w \in \Sigma^*$. The \emph{series $\|\mathcal{A}\| \in S\llangle\Sigma^*\rrangle$ realised by $\mathcal{A}$} is then given by $(\|\mathcal{A}\|,w) = \sum_{\gamma \in \mathcal{R}(\mathcal{A},w)} \overline{\sigma}(\gamma)$ for all $w \in \Sigma^*$. 
A~series $r \in S\llangle\Sigma^*\rrangle$ is \emph{rational} over $S$ if $r = \|\mathcal{A}\|$ for some weighted automaton $\mathcal{A}$ over $S$.  

Every $\mathcal{A} = (Q,\sigma,\iota,\tau)$ over $S$ and $\Sigma$ such that $Q = [n]$ for some $n \in \mathbb{N}$ (which can always be assumed) determines a \emph{linear representation}
$\mathcal{P}_{\mathcal{A}} = (n,\mathbf{i},\mu,\mathbf{f})$, where $\mathbf{i} = (\iota(1),\ldots,\iota(n))$ is a row vector of initial weights, $\mu$~is~a~homomorphism from $\Sigma^*$ to the monoid $S^{n \times n}$ of
all $n \times n$ matrices over $S$ with matrix multiplication such that $\mu(a) = (\sigma(i,a,j))_{n \times n}$ for~all $a \in \Sigma$, and $\mathbf{f} = (\tau(1),\ldots,\tau(n))^T$ is a~column vector of final weights.
One then has $(\|\mathcal{A}\|,w) = \mathbf{i} \mu(w) \mathbf{f}$ for all $w \in \Sigma^*$.

Let $\mathcal{A}_k = (Q_k,\sigma_k,\iota_k,\tau_k)$ be weighted automata over $S$ and $\Sigma$ for $k = 1,\ldots,n$, where $n \in \mathbb{N}$. The \emph{disjoint union} of $\mathcal{A}_1,\ldots,\mathcal{A}_n$ is an automaton
$\mathcal{A} = (Q,\sigma,\iota,\tau)$, where $Q = \bigcup_{k = 1}^n \left(Q_k \times \{k\}\right)$, $\sigma((p,k),a,(q,k)) = \sigma_k(p,a,q)$, $\iota(p,k) = \iota_k(p)$, and $\tau(q,k) = \tau_k(q)$ for $k = 1,\ldots,n$ and all $p,q \in Q_k$ and $a \in \Sigma$;
moreover, $\sigma((p,k),a,(q,\ell)) = 0$ for all $k,\ell \in [n]$ such that $k \neq \ell$, and all $p \in Q_k$, $q \in Q_{\ell}$, and $a \in \Sigma$. Clearly $\|\mathcal{A}\| = \|\mathcal{A}_1\| + \ldots + \|\mathcal{A}_n\|$.  

We also use some elementary concepts from \emph{algebraic language theory}; we refer the reader to \cite{pin1997a,straubing2021a} for the basic theory of \emph{varieties of languages}, and~to~\cite{almeida1994a}
for the theory of \emph{pseudovarieties of finite monoids}.

\section{Reversible Weighted Automata}

We now define the \emph{reversible weighted automata} over a semiring $S$ by generalising the notion of~reversible finite automata as~understood by~J.-\'E. Pin~\cite{pin1992a}.

\begin{definition}
Let $S$ be a semiring and $\Sigma$ an alphabet. A weighted automaton $\mathcal{A} = (Q,\sigma,\iota,\tau)$ over $S$ and~$\Sigma$ is \emph{reversible} if the following two conditions are satisfied for all $p,p',q,q' \in Q$ and $a \in \Sigma$:
\begin{enumerate}[label=$(\roman*)$]
\item{If $\sigma(p,a,q)$ and $\sigma(p,a,q')$ are both nonzero, then $q = q'$; 
}
\item{If $\sigma(p,a,q)$ and $\sigma(p',a,q)$ are both nonzero, then $p = p'$.
}
\end{enumerate}
\end{definition}
 
The condition $(i)$ says that the transitions of the automaton $\mathcal{A}$ have to be deterministic; as there still can be more than one initial state, this is equivalent to saying that $\mathcal{A}$ is \emph{finitely sequential}~\cite{bala2013a,kostolanyi2022c,kostolanyi2024a,paul2019a,paul2020a,paul2021a}.
The condition $(ii)$ says that the transpose of $\mathcal{A}$ has the same property. If in addition there is at most one state $p \in Q$ with $\iota(p) \neq 0$ and at most one state $q \in Q$ with $\tau(q) \neq 0$, the~weighted automaton $\mathcal{A}$ is called \emph{bideterministic}~\cite{kostolanyi2022b,kostolanyi2023c}.

We call a series over a semiring $S$ and over an alphabet $\Sigma$ \emph{reversible} over~$S$ if~it~is realised by some reversible weighted automaton over $S$ and $\Sigma$. 
The~set of~all reversible series over $S$ and~$\Sigma$ is denoted by $\revs{S}{\Sigma}$. Moreover, by~a~\emph{\mbox{reversible} language over $S$ and $\Sigma$}, we understand a support of~some reversible series over~$S$~and~$\Sigma$;
we write
$\revl{S}{\Sigma} = \{\supp(r) \mid r \in \revs{S}{\Sigma}\}$ for the~set of~all such languages,
and $\revlc{S}$ for the class of all reversible languages over~$S$ and any alphabet. 
Observe that the reversible languages over the~Boolean semi\-ring $\mathbb{B}$ are precisely the~usual reversible languages in~the~sense of~J.-\'E.~Pin~\cite{pin1992a}.
We thus denote the positive variety of all such languages by $\revlc{\mathbb{B}}$.\goodbreak      
            
\begin{example}
We mostly consider a very special class of reversible weighted automata over \emph{finite} -- or slightly more generally, \emph{locally finite} -- \emph{commutative rings} in this article.
What can be readily observed right now is that given any nontrivial locally finite ring $R$, the class $\revlc{R}$ of all reversible languages over $R$ contains at least some languages that are not reversible in the usual sense, i.e., which do not belong to $\revlc{\mathbb{B}}$. 
Indeed, let $R$ be of characteristic $n \geq 2$. Then the reversible weighted automaton $\mathcal{A}$ in~Fig.~\ref{fig:1} clearly realises the~series
\begin{displaymath}
\|\mathcal{A}\| = \sum_{t \in \mathbb{N} \setminus \{0\}} a^t.  
\end{displaymath}

As a consequence, $\supp(\|\mathcal{A}\|) = a^+$ is a reversible language over $R$, so~that $a^+ \in \revl{R}{\{a\}}$. On~the~other hand, $a^+ \not\in \revl{\mathbb{B}}{\{a\}}$, as it does not satisfy the~characterisation of reversible languages from~\cite{pin1992a}; more precisely, the~ordered syntactic monoid of $a^+$ does not satisfy the pseudoinequality $x^{\omega} \leq 1$.

\begin{figure}[h]
\vspace{-14pt}
\begin{center}
\includegraphics{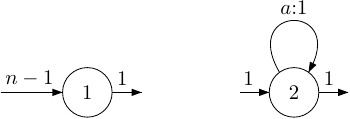}
\vspace{-12pt}
\end{center}
\caption{\label{fig:1}A reversible weighted automaton $\mathcal{A}$ over a nontrivial locally finite ring $R$ of~characteristic $n \geq 2$ and over a~unary alphabet $\Sigma = \{a\}$.}
\vspace{-20pt}
\end{figure}
\end{example}  

Given the observation from the previous example, it seems to be worthwhile to take a closer look at the properties of the classes of reversible series and~languages introduced above.
At least a few observations can be made at~the~abstract level of semirings. Given any semiring $S$ and alphabet $\Sigma$, we denote by $\orevs{S}{\Sigma}$ the set of all series realised by reversible weighted automata $\mathcal{A} = (Q,\sigma,\iota,\tau)$ over $S$ and $\Sigma$
with \emph{precisely one initial state}, i.e., a~state~$q$ \mbox{satisfying} $\iota(q) \neq 0$.\footnote{Note that such automata form a weighted generalisation of reversible deterministic finite automata in the sense of M. Holzer, S. Jakobi, and M. Kutrib \cite{holzer2015a,holzer2018a}.} We also write
$\orevl{S}{\Sigma} = \{\supp(r) \mid r \in \orevs{S}{\Sigma}\}$
for~the~corresponding set of supports. We then have the following obvious characterisation of~the~reversible series over $S$.

\begin{proposition}
\label{prop:revs-char}
Let $S$ be a semiring and $\Sigma$ an alphabet. Then $\revs{S}{\Sigma}$ consists of precisely all finite sums of series from $\orevs{S}{\Sigma}$.
\end{proposition}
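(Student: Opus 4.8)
The plan is to establish the two inclusions separately. For the easy direction, suppose $r = r_1 + \ldots + r_n$ where each $r_k \in \orevs{S}{\Sigma}$ is realised by a reversible weighted automaton $\mathcal{A}_k$ with precisely one initial state. Form the disjoint union $\mathcal{A} = \mathcal{A}_1 \dotcup \ldots \dotcup \mathcal{A}_n$ as defined in the preliminaries; then $\|\mathcal{A}\| = \|\mathcal{A}_1\| + \ldots + \|\mathcal{A}_n\| = r$. It remains to check that $\mathcal{A}$ is reversible: conditions $(i)$ and $(ii)$ in the definition are purely local to transitions, and since the disjoint union introduces no transitions between distinct components (all cross-component weights are $0$), any violation of $(i)$ or $(ii)$ in $\mathcal{A}$ would already be a violation inside some $\mathcal{A}_k$, contradicting reversibility of $\mathcal{A}_k$. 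Hence $r \in \revs{S}{\Sigma}$.

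For the converse, let $r \in \revs{S}{\Sigma}$ be realised by a reversible weighted automaton $\mathcal{A} = (Q,\sigma,\iota,\tau)$. Let $I = \{q \in Q \mid \iota(q) \neq 0\}$ be the (finite) set of initial states. For each $q \in I$, define the automaton $\mathcal{A}_q = (Q,\sigma,\iota_q,\tau)$ obtained from $\mathcal{A}$ by keeping the same state set, transition weighting, and final weights, but setting $\iota_q(q) = \iota(q)$ and $\iota_q(p) = 0$ for all $p \neq q$. Each $\mathcal{A}_q$ has precisely one initial state, and it inherits conditions $(i)$ and $(ii)$ from $\mathcal{A}$ verbatim since those conditions only constrain $\sigma$, which is unchanged; thus $\mathcal{A}_q$ is reversible and $\|\mathcal{A}_q\| \in \orevs{S}{\Sigma}$. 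Finally, since every run $\gamma = q_0 a_1 q_1 \ldots q_t$ of $\mathcal{A}$ contributing to $(\|\mathcal{A}\|,w)$ has a well-defined starting state $q_0$, and $\overline{\sigma}(\gamma) = \iota(q_0)\sigma(\gamma)\tau(q_t) = 0$ unless $q_0 \in I$, we can partition $\mathcal{R}(\mathcal{A},w)$ according to $q_0$ and obtain $(\|\mathcal{A}\|,w) = \sum_{q \in I} \sum_{\gamma \in \mathcal{R}(\mathcal{A},w),\, q_0 = q} \overline{\sigma}(\gamma) = \sum_{q \in I} (\|\mathcal{A}_q\|,w)$ for every $w$, so $r = \sum_{q \in I} \|\mathcal{A}_q\|$ is a finite sum of series from $\orevs{S}{\Sigma}$.

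There is essentially no obstacle here; the statement is labelled "obvious" in the text and the argument is bookkeeping. The only mild subtlety worth a sentence in the write-up is that in $\mathcal{A}_q$ we retain the full state set $Q$ rather than trimming to states reachable from $q$ — this is harmless because unreachable states contribute no nonzero runs, and it keeps the reversibility conditions trivially inherited without any need to re-verify them after pruning.
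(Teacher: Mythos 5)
Your proof is correct and follows the same route as the paper's: splitting a reversible automaton into the automata $\mathcal{A}_q$ with a single surviving initial weight for the forward direction, and taking disjoint unions for the converse. The extra detail you supply (partitioning runs by starting state, noting that the reversibility conditions depend only on $\sigma$) is just the bookkeeping the paper leaves implicit.
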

\begin{proof}
Let $\mathcal{A} = (Q,\sigma,\iota,\tau)$ be a reversible weighted automaton over $S$ and $\Sigma$. The~automaton $\mathcal{A}_q = (Q,\sigma,\iota_q,\tau)$,
with $\iota_q(q) = \iota(q)$ and $\iota_q(p) = 0$ for all \mbox{$p \in Q \setminus \{q\}$}, is then reversible with precisely one initial state for each $q \in Q$ such that $\iota(q) \neq 0$; as a result, $\|\mathcal{A}_q\| \in \orevs{S}{\Sigma}$.
Thus
\begin{displaymath}
\|\mathcal{A}\| = \sum_{\substack{q \in Q \\ \iota(q) \neq 0}} \|\mathcal{A}_q\|
\end{displaymath}
is a finite sum of series from $\orevs{S}{\Sigma}$. Conversely, given $k \in \mathbb{N}$ and~series $r_1,\ldots,r_k \in \orevs{S}{\Sigma}$, each of the series $r_j$ for $j = 1,\ldots,k$ is realised by some reversible weighted automaton
$\mathcal{A}_j$ over $S$ and $\Sigma$. The disjoint union of~these automata is then clearly reversible as well, hence $r_1 + \ldots + r_k \in \revs{S}{\Sigma}$.\qed   
\end{proof}\goodbreak

\begin{proposition}
\label{prop:closure-sum}
The set $\revs{S}{\Sigma}$ is closed under addition and under -- both left and right -- scalar multiplication for every semiring~$S$ and alphabet $\Sigma$.
\end{proposition}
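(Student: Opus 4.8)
The plan is to handle the three closure properties by separate direct constructions, each exploiting the fact that conditions $(i)$ and $(ii)$ in the definition of reversibility constrain only the transition weighting $\sigma$: any weighted automaton obtained from a reversible one by altering $\iota$ or $\tau$ alone, while keeping $\sigma$ untouched, is again reversible. This observation does almost all the work.

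For closure under addition I would take $r_1, r_2 \in \revs{S}{\Sigma}$, realised by reversible automata $\mathcal{A}_1$ and $\mathcal{A}_2$, and form their disjoint union $\mathcal{A}$. Because every transition of nonzero weight in $\mathcal{A}$ stays within a single component, conditions $(i)$ and $(ii)$ for $\mathcal{A}$ reduce to the corresponding conditions for $\mathcal{A}_1$ and $\mathcal{A}_2$ separately, so $\mathcal{A}$ is reversible; and $\|\mathcal{A}\| = \|\mathcal{A}_1\| + \|\mathcal{A}_2\| = r_1 + r_2$ by the property of disjoint unions recorded in the preliminaries. This is essentially the second half of the proof of Proposition~\ref{prop:revs-char}.

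For left scalar multiplication by $a \in S$ I would take $r \in \revs{S}{\Sigma}$ realised by a reversible $\mathcal{A} = (Q,\sigma,\iota,\tau)$ and pass to $\mathcal{A}' = (Q,\sigma,\iota',\tau)$ with $\iota'(q) = a\,\iota(q)$ for all $q \in Q$. Since $\sigma$ is unchanged, $\mathcal{A}'$ is reversible, and since the set of runs on each $w$ is unchanged, factoring $a$ out of the sum $\sum_{\gamma} \overline{\sigma}(\gamma)$ on the left — legitimate by associativity of $\cdot$ and by left distributivity in $S$ — gives $\|\mathcal{A}'\| = a \cdot r$. Right scalar multiplication is symmetric: one replaces $\tau$ by $\tau'(q) = \tau(q)\,a$ and factors $a$ out of the run-sum on the right using right distributivity, obtaining $r \cdot a$.

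I do not anticipate a genuine obstacle. The only points that need a moment's care are the observation that a disjoint union of reversible automata is again reversible — which rests on the absence of transitions of nonzero weight between components — and the bookkeeping that pulling a scalar out of the run-sum respects the one-sidedness of the relevant distributive law.
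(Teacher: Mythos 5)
Your proof is correct and follows essentially the same route as the paper: closure under addition via disjoint union (which is exactly the second half of the proof of Proposition~\ref{prop:revs-char}, to which the paper simply appeals), and closure under left/right scalar multiplication by scaling the initial/final weights, which leaves $\sigma$ and hence reversibility untouched. The paper states both steps more tersely, but the content is identical.
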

\begin{proof}
Closure under addition follows directly by Proposition \ref{prop:revs-char}. For left or right scalar multiplication by $\alpha \in S$, it is clearly sufficient to multiply all initial or final weights of~a~reversible weighted automaton by $\alpha$.\qed
\end{proof}

Any finite automaton can be turned into a weighted automaton over a~nontrivial semiring $S$ by assigning the weight $1$ to all its transitions, initial states, and final states.
It is standard that for \emph{deterministic} finite automata recognising some language $L$, the resulting automaton realises the characteristic series~$\underline{L}$ of $L$ over $S$.  
The following proposition records this observation specifically for~the~reversible automata with one initial state. 

\begin{proposition}
\label{prop:char-ser}
Let $S$ be a nontrivial semiring, $L \in \orevl{\mathbb{B}}{\Sigma}$, and $\underline{L}$ be the~characteristic series of $L$ over $S$. Then $\underline{L} \in \orevs{S}{\Sigma}$. As a consequence, $L \in \orevl{S}{\Sigma}$ and $\orevl{\mathbb{B}}{\Sigma} \subseteq \orevl{S}{\Sigma}$. 
\end{proposition}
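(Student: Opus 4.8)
The plan is to witness $L \in \orevl{\mathbb{B}}{\Sigma}$ by a reversible $\mathbb{B}$-automaton with a single initial state, reinterpret its $0$/$1$ weights inside $S$, and then argue that determinism forces the resulting $S$-automaton to realise exactly the characteristic series $\underline{L}$. First I would fix a reversible weighted automaton $\mathcal{A} = (Q,\sigma,\iota,\tau)$ over $\mathbb{B}$ and $\Sigma$ with precisely one initial state $q_0$ such that $\supp(\|\mathcal{A}\|) = L$; since a series over $\mathbb{B}$ coincides with the characteristic series of its support, this already yields $\|\mathcal{A}\| = \underline{L}$ over $\mathbb{B}$. I would then define $\mathcal{A}' = (Q,\sigma',\iota',\tau')$ over $S$ by replacing every occurrence of $0 \in \mathbb{B}$ by $0_S$ and of $1 \in \mathbb{B}$ by $1_S$ in $\sigma$, $\iota$, and $\tau$. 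Because $S$ is nontrivial (so $0_S \neq 1_S$), a weight of $\mathcal{A}'$ is nonzero exactly when the corresponding weight of $\mathcal{A}$ is, hence conditions $(i)$ and $(ii)$ from the definition transfer verbatim to $\mathcal{A}'$ and $q_0$ remains the unique state of nonzero initial weight; thus $\mathcal{A}'$ is reversible over $S$ with precisely one initial state.

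The core of the argument is then to check $\|\mathcal{A}'\| = \underline{L}$. Since ``run'' depends only on which transition weights are nonzero, the runs of $\mathcal{A}'$ on a word $w$ are precisely the runs of $\mathcal{A}$ on $w$. A run $\gamma$ contributes a possibly nonzero term $\overline{\sigma}'(\gamma)$ to $(\|\mathcal{A}'\|,w)$ only when it starts in $q_0$ and ends in some state $q$ with $\tau(q) = 1$; moreover every transition weight along it equals $1_S$, so in that case $\overline{\sigma}'(\gamma) = 1_S \cdot 1_S \cdots 1_S = 1_S$. By condition $(i)$, an easy induction on $|w|$ shows that at most one run of $\mathcal{A}'$ starts in $q_0$ and reads $w$, so the sum defining $(\|\mathcal{A}'\|,w)$ has at most one summand. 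If $w \in L$, then $(\|\mathcal{A}\|,w) = 1$ over $\mathbb{B}$ provides a run from $q_0$ to a final state on $w$, which must be this unique run, so $(\|\mathcal{A}'\|,w) = 1_S$; if $w \notin L$, no run of $\mathcal{A}$ leads from $q_0$ to a final state on $w$, hence no run of $\mathcal{A}'$ contributes and $(\|\mathcal{A}'\|,w) = 0_S$. Therefore $\|\mathcal{A}'\| = \underline{L}$, giving $\underline{L} \in \orevs{S}{\Sigma}$; and since $S$ is nontrivial, $\supp(\underline{L}) = L$, so $L \in \orevl{S}{\Sigma}$, and as $L$ was an arbitrary element of $\orevl{\mathbb{B}}{\Sigma}$ this establishes $\orevl{\mathbb{B}}{\Sigma} \subseteq \orevl{S}{\Sigma}$.

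I do not expect a genuine obstacle here: the only point that needs care is that over an arbitrary semiring a sum of run weights could a priori behave badly, but the single initial state together with determinism guarantees that at most one run is ever relevant, so no unambiguity, cancellation-freeness, or commutativity of $S$ is required. In effect the proof is just the classical observation that a (partial) deterministic automaton realises the characteristic series of the language it recognises, transported into the weighted setting, with the extra remark that reversibility is a purely structural property preserved by the $0 \mapsto 0_S$, $1 \mapsto 1_S$ relabelling.
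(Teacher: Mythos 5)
Your proposal is correct and follows essentially the same route as the paper: take a Boolean reversible automaton with one initial state witnessing $L$, reinterpret its $0$/$1$ weights in $S$, observe that reversibility and the single initial state are preserved, and use determinism to conclude that at most one run contributes, so the realised series is $\underline{L}$. You merely spell out the ``at most one run, hence no problematic sums'' step that the paper compresses into ``Since $\mathcal{A}$ is deterministic, surely $\|\mathcal{A}'\| = \underline{L}$.''
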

\begin{proof}
As $L \in \orevl{\mathbb{B}}{\Sigma}$, the language $L$ is surely recognised by some reversible finite automaton $\mathcal{A} = (Q,\rightarrow,I,F)$ with precisely one initial state \mbox{$i \in I$}.
Let $\mathcal{A}' = (Q,\sigma,\iota,\tau)$ be a~weighted automaton such that for all $p,q \in Q$ and~\mbox{$a \in \Sigma$}, one has $\sigma(p,a,q) = 1$ if $p \stackrel{a}{\rightarrow} q$ and~$\sigma(p,a,q) = 0$ otherwise,
$\iota(p) = 1$ if $p \in I$ and~$\iota(p) = 0$ otherwise, and~$\tau(q) = 1$ if $q \in F$ and~$\tau(q) = 0$ otherwise. Since $\mathcal{A}$ is deterministic, surely $\|\mathcal{A}'\| = \underline{L}$, while the automaton
$\mathcal{A}'$ is clearly reversible with precisely one initial state $i$. As a result, it follows that $\underline{L} \in \orevs{S}{\Sigma}$ and~$L = \supp(\underline{L}) \in \orevl{S}{\Sigma}$.\qed
\end{proof}

\section{The Case of $\mathbb{F}_2$}

We now explore in detail the \emph{reversible weighted automata} over the \emph{two-element field $\mathbb{F}_2$}. The field $\mathbb{F}_2$ is arguably the most important finite ring from 
the viewpoint of weighted automata theory, and weighted automata over $\mathbb{F}_2$ can also be seen as natural devices for describing languages. Indeed, any formal power series over $\mathbb{F}_2$
is a characteristic series of its support, and both can be identified as a result. The rational series over $\mathbb{F}_2$ can thus be seen as rational languages, and the realisation of~a~rational series by~a~weighted automaton over $\mathbb{F}_2$ corresponds to recognising a~rational language in~``parity mode'', where a word $w$ gets accepted if and only if there is an odd number of successful runs on~$w$ in~the~automaton. 
In fact, the theory of formal power series over $\mathbb{F}_2$ viewed as languages has largely been developed under the name \emph{formal languages over $\mathrm{GF}(2)$}~\cite{bakinova2022a,okhotin2024a},
and weighted automata over $\mathbb{F}_2$ were also studied under the name \emph{symmetric difference automata} \cite{merwe2012a,zijl2004a}.\goodbreak 
                                                                                                                                                 
Similar observations can be done in particular for reversible weighted automata over $\mathbb{F}_2$: the series from $\revs{\mathbb{F}_2}{\Sigma}$ might be, for any alphabet $\Sigma$, identified with languages from $\revl{\mathbb{F}_2}{\Sigma}$.
It thus seems to be natural to ask about a characterisation of the language class $\revlc{\mathbb{F}_2}$ and about its relation to the usual class of reversible languages $\revlc{\mathbb{B}}$. 
We prove in this section that $\revlc{\mathbb{F}_2}$ is in fact the \emph{variety} generated by the positive variety $\revlc{\mathbb{B}}$ -- or~equivalently, the \emph{Boolean closure} of $\revlc{\mathbb{B}}$.
\smallskip

Let us start by recording an easy observation that reversible weighted automata over $\mathbb{F}_2$ with \emph{one initial state} realise precisely the~characteristic series
of~languages recognised by reversible finite automata with one initial state.

\begin{proposition}
\label{prop:chs2}
Let $\Sigma$ be an alphabet and $\mathcal{A}$ a reversible weighted automaton over $\mathbb{F}_2$ and $\Sigma$ with one initial state. Then $\|\mathcal{A}\| = \underline{L}$ for some $L \in \orevl{\mathbb{B}}{\Sigma}$ over $\mathbb{F}_2$.
As a consequence, $\orevl{\mathbb{F}_2}{\Sigma} = \orevl{\mathbb{B}}{\Sigma}$.
\end{proposition}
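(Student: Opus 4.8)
The plan is to show the two inclusions between $\orevl{\mathbb{F}_2}{\Sigma}$ and $\orevl{\mathbb{B}}{\Sigma}$ separately; the inclusion $\orevl{\mathbb{B}}{\Sigma} \subseteq \orevl{\mathbb{F}_2}{\Sigma}$ is already furnished by Proposition~\ref{prop:char-ser} (applied with $S = \mathbb{F}_2$), so the whole content lies in proving that a reversible weighted automaton over $\mathbb{F}_2$ with one initial state realises the characteristic series of a language from $\orevl{\mathbb{B}}{\Sigma}$. First I would take a reversible $\mathcal{A} = (Q,\sigma,\iota,\tau)$ over $\mathbb{F}_2$ with a unique initial state $i$ (the state with $\iota(i) = 1$; note $\iota(i)$ must equal $1$ since $\mathbb{F}_2$ has only the weights $0$ and $1$). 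The key structural observation is that condition~$(i)$ of reversibility means that for every state $p$ and letter $a$ there is \emph{at most one} state $q$ with $\sigma(p,a,q) = 1$; hence the nonzero transitions form a (partial) deterministic action. Consequently, for any word $w$, there is at most one run of $\mathcal{A}$ on $w$ starting from $i$, and since all transition weights along it are $1$, the coefficient $(\|\mathcal{A}\|,w) = \iota(i)\,\sigma(\gamma)\,\tau(q_t)$ is either $0$ or $1$ — it equals $1$ exactly when the unique run from $i$ on $w$ exists and ends in a state $q_t$ with $\tau(q_t) = 1$.

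Next I would extract the underlying reversible finite automaton. Define $\mathcal{B} = (Q,\rightarrow,\{i\},F)$ where $p \stackrel{a}{\rightarrow} q$ iff $\sigma(p,a,q) = 1$, and $F = \{q \in Q \mid \tau(q) = 1\}$. Reversibility conditions~$(i)$ and~$(ii)$ for $\mathcal{A}$ translate verbatim into determinism and codeterminism of $\rightarrow$, so $\mathcal{B}$ is a reversible finite automaton with one initial state, i.e.\ $\|\mathcal{B}\| =: L \in \orevl{\mathbb{B}}{\Sigma}$. By the run-counting observation of the previous paragraph, $w \in L$ iff the unique run from $i$ on $w$ reaches $F$ iff $(\|\mathcal{A}\|,w) = 1$; therefore $\|\mathcal{A}\| = \underline{L}$ over $\mathbb{F}_2$. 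This gives $\orevl{\mathbb{F}_2}{\Sigma} \subseteq \orevl{\mathbb{B}}{\Sigma}$, and combined with Proposition~\ref{prop:char-ser} we conclude $\orevl{\mathbb{F}_2}{\Sigma} = \orevl{\mathbb{B}}{\Sigma}$, as well as the sharper claim $\|\mathcal{A}\| = \underline{L}$.

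This argument is essentially routine; there is no serious obstacle, the only point requiring a little care being the ``at most one run'' claim. That claim is exactly where the one-initial-state hypothesis and determinism condition~$(i)$ combine: by induction on $|w|$, the set of states reachable from $i$ by a run on $w$ has size at most one, because a deterministic step sends a singleton to a set of size at most one. (Condition~$(ii)$ is not needed for this direction — it is only recorded so that $\mathcal{B}$ qualifies as \emph{reversible} rather than merely deterministic, which is what membership in $\orevl{\mathbb{B}}{\Sigma}$ demands.) One should also note explicitly that over $\mathbb{F}_2$ a sum of finitely many copies of $1$ is $0$ or $1$, so that even if one did not argue uniqueness of the run, the coefficient would still be a Boolean value; but the uniqueness argument is cleaner and is what identifies the language with $L$ precisely.
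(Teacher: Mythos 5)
Your proposal is correct and follows essentially the same route as the paper: both construct the underlying reversible finite automaton by keeping the weight-$1$ transitions, observe that determinism plus the single initial state forces at most one run per word so that $\|\mathcal{A}\|$ is the characteristic series of the recognised language, and then invoke Proposition~\ref{prop:char-ser} for the reverse inclusion. You merely spell out in more detail the "at most one run" step that the paper dispatches with the word "clearly".
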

\begin{proof}
Let $\mathcal{A} = (Q,\sigma,\iota,\tau)$ be a reversible weighted automaton with one initial state over $\mathbb{F}_2$ and $\Sigma$.
As $\mathcal{A}$ is deterministic, the reversible finite automaton $\mathcal{A}' = (Q,\rightarrow,I,F)$ such that one has $p \stackrel{a}{\rightarrow} q$ for $p,q \in Q$ and $a \in \Sigma$ if~and~only~if $\sigma(p,a,q) = 1$, while $I = \{p \in Q \mid \iota(p) = 1\}$ and $F = \{q \in Q \mid \tau(q) = 1\}$, clearly recognises $L = \supp(\|\mathcal{A}\|)$. Hence $L \in \orevl{\mathbb{B}}{\Sigma}$ and $\|\mathcal{A}\| = \underline{L}$. The equality $\orevl{\mathbb{F}_2}{\Sigma} = \orevl{\mathbb{B}}{\Sigma}$ then follows by~Proposition~\ref{prop:char-ser}.\qed
\end{proof}

Recall from \cite{axelsen2017a} that the class of languages recognised by reversible finite automata with precisely one initial state is closed under intersection. 
Indeed, if \mbox{$\mathcal{A}_1 = (Q_1,\rightarrow_1,\{i_1\},F_1)$} and $\mathcal{A}_2 = (Q_2,\rightarrow_2,\{i_2\},F_2)$ with $i_1 \in Q_1$ and $i_2 \in Q_2$ are
reversible finite automata over an alphabet $\Sigma$, then the finite automaton $\mathcal{A} = (Q_1 \times Q_2, \rightarrow, \{(i_1,i_2)\}, F_1 \times F_2)$, with $\rightarrow$ given for all $p_1,q_1 \in Q_1$, $p_2,q_2 \in Q_2$, and $a \in \Sigma$ by
$(p_1,p_2) \stackrel{a}{\rightarrow} (q_1,q_2)$ if and only if $p_1 \stackrel{a}{\rightarrow}_1 q_1$ and $p_2 \stackrel{a}{\rightarrow}_2 q_2$, is clearly reversible as well.
Let us record this observation for later reference.  

\begin{proposition}[H. B. Axelsen, M. Holzer, and M. Kutrib \cite{axelsen2017a}]     
\label{prop:ahk}
The set $\orevl{\mathbb{B}}{\Sigma}$ is closed under intersection for every alphabet $\Sigma$.
\end{proposition}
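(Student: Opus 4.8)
The plan is to prove that $\orevl{\mathbb{B}}{\Sigma}$ is closed under intersection by giving the explicit product construction already sketched in the paragraph preceding the statement, and then verifying carefully that all three required properties of the product automaton hold: that it is deterministic, that it is codeterministic, and that it recognises exactly $\|\mathcal{A}_1\| \cap \|\mathcal{A}_2\|$.

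First I would set up the construction. Given reversible finite automata $\mathcal{A}_1 = (Q_1,\rightarrow_1,\{i_1\},F_1)$ and $\mathcal{A}_2 = (Q_2,\rightarrow_2,\{i_2\},F_2)$ over $\Sigma$, each with a single initial state, I would form $\mathcal{A} = (Q_1 \times Q_2, \rightarrow, \{(i_1,i_2)\}, F_1 \times F_2)$ with $(p_1,p_2) \stackrel{a}{\rightarrow} (q_1,q_2)$ iff $p_1 \stackrel{a}{\rightarrow}_1 q_1$ and $p_2 \stackrel{a}{\rightarrow}_2 q_2$. The single-initial-state requirement is immediate since $\{(i_1,i_2)\}$ is a singleton. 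For determinism, I would observe that if $(p_1,p_2) \stackrel{a}{\rightarrow} (q_1,q_2)$ and $(p_1,p_2) \stackrel{a}{\rightarrow} (q_1',q_2')$, then by determinism of $\mathcal{A}_1$ we get $q_1 = q_1'$ and by determinism of $\mathcal{A}_2$ we get $q_2 = q_2'$, so $(q_1,q_2) = (q_1',q_2')$. The codeterminism (condition $(ii)$) argument is entirely symmetric, using that the transposes of $\mathcal{A}_1$ and $\mathcal{A}_2$ are deterministic.

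Next I would verify $\|\mathcal{A}\| = \|\mathcal{A}_1\| \cap \|\mathcal{A}_2\|$. The key point, which I would prove by a straightforward induction on the length of $w$, is that $(i_1,i_2) \stackrel{w}{\rightarrow} (q_1,q_2)$ in $\mathcal{A}$ if and only if $i_1 \stackrel{w}{\rightarrow}_1 q_1$ in $\mathcal{A}_1$ and $i_2 \stackrel{w}{\rightarrow}_2 q_2$ in $\mathcal{A}_2$. Given this, $w \in \|\mathcal{A}\|$ iff there is a path from $(i_1,i_2)$ to some $(q_1,q_2) \in F_1 \times F_2$, iff $w$ is accepted by $\mathcal{A}_1$ and by $\mathcal{A}_2$. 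Since $\mathcal{A}$ is a reversible finite automaton with one initial state, this shows $\|\mathcal{A}_1\| \cap \|\mathcal{A}_2\| \in \orevl{\mathbb{B}}{\Sigma}$.

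Honestly, there is no real obstacle here — the statement is attributed to Axelsen, Holzer, and Kutrib and the construction is the classical product automaton, with the only thing to check being that reversibility (both determinism and codeterminism) is preserved coordinatewise, which it plainly is. The one point deserving a sentence of care is that the single-initial-state hypothesis is genuinely used: a product of automata with multiple initial states would need $I_1 \times I_2$ as its initial set, which is fine for language recognition but irrelevant here since we are in the $\mathrm{1RevL}$ setting. I would therefore keep the proof short, essentially reproducing the construction above with the verification of properties $(i)$ and $(ii)$ made explicit and the inductive equivalence stated without belaboring the trivial induction step.
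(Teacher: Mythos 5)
Your proof is correct and follows the same route as the paper, which establishes this proposition by exactly the product construction you describe (coordinatewise transitions, initial state $(i_1,i_2)$, final set $F_1 \times F_2$) in the paragraph preceding the statement. Your added verification of determinism, codeterminism, and the run correspondence merely makes explicit what the paper dismisses as ``clearly reversible.''
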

     
We are now prepared to explore some basic closure properties of~the~class $\revlc{\mathbb{F}_2}$ of reversible languages over $\mathbb{F}_2$. 
First of all, Proposition \ref{prop:closure-sum} directly implies closure of $\revlc{\mathbb{F}_2}$ under symmetric difference. Somewhat more interesting is the closure
of this class under all Boolean operations, which we now establish.

\begin{proposition}
\label{prop:clos}
Let $\Sigma$ be an alphabet. The set $\revl{\mathbb{F}_2}{\Sigma}$ is then closed under the Boolean operations.
\end{proposition}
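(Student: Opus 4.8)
The plan is to verify closure under symmetric difference, intersection, and complement, since these three operations generate all Boolean operations (e.g. $L_1 \cup L_2 = (L_1 \oplus L_2) \oplus (L_1 \cap L_2)$, and complement together with intersection gives union directly). Closure under symmetric difference is immediate: if $r_1, r_2 \in \revs{\mathbb{F}_2}{\Sigma}$ then $r_1 + r_2 \in \revs{\mathbb{F}_2}{\Sigma}$ by Proposition~\ref{prop:closure-sum}, and over $\mathbb{F}_2$ one has $\supp(r_1 + r_2) = \supp(r_1) \oplus \supp(r_2)$ because addition is componentwise XOR. So the two nontrivial cases are intersection and complement.

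For intersection, I would reduce to the one-initial-state case using Proposition~\ref{prop:revs-char}: write $L_1 = \bigcup_{j} \supp(r_j)$ and $L_2 = \bigcup_k \supp(s_k)$ where each $r_j, s_k$ is realised by a reversible weighted automaton with one initial state. Actually it is cleaner to work at the level of languages: by Proposition~\ref{prop:chs2}, $\revl{\mathbb{F}_2}{\Sigma}$ consists exactly of symmetric differences of languages from $\orevl{\mathbb{B}}{\Sigma}$ (each summand automaton has one initial state and realises a characteristic series). So let $L = A_1 \oplus \cdots \oplus A_m$ and $L' = B_1 \oplus \cdots \oplus B_n$ with all $A_i, B_j \in \orevl{\mathbb{B}}{\Sigma}$. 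Then $L \cap L' = \bigoplus_{i,j} (A_i \cap B_j)$ — this is just distributivity of intersection over symmetric difference, which holds for arbitrary sets. Now each $A_i \cap B_j$ lies in $\orevl{\mathbb{B}}{\Sigma}$ by Proposition~\ref{prop:ahk}, hence $L \cap L'$ is again a symmetric difference of languages from $\orevl{\mathbb{B}}{\Sigma}$, so $L \cap L' \in \revl{\mathbb{F}_2}{\Sigma}$.

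For complement, the key observation is that over $\mathbb{F}_2$, $\overline{\supp(r)} = \supp(r + \underline{\Sigma^*})$ where $\underline{\Sigma^*}$ is the all-ones series, which equals $1 \oplus r$ componentwise. So it suffices to show $\underline{\Sigma^*} \in \revs{\mathbb{F}_2}{\Sigma}$ and invoke closure under addition. The language $\Sigma^*$ is recognised by the trivial one-state reversible finite automaton with its state initial, final, and carrying a loop on each letter — this is reversible in Pin's sense and has one initial state, so $\Sigma^* \in \orevl{\mathbb{B}}{\Sigma}$, whence $\underline{\Sigma^*} \in \orevs{\mathbb{F}_2}{\Sigma} \subseteq \revs{\mathbb{F}_2}{\Sigma}$ by Proposition~\ref{prop:char-ser} (or Proposition~\ref{prop:chs2}). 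Then for any $L = \supp(r)$ with $r \in \revs{\mathbb{F}_2}{\Sigma}$, the series $r + \underline{\Sigma^*}$ is reversible and has support $\overline{L}$.

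I do not expect a genuine obstacle here: every ingredient is already available. The only points requiring a little care are (a) the translation between series over $\mathbb{F}_2$ and their supports — one must note that over $\mathbb{F}_2$, support of a sum is the symmetric difference of supports, which is special to characteristic $2$ and fails over larger rings — and (b) making sure the reduction to $\orevl{\mathbb{B}}{\Sigma}$ via Propositions~\ref{prop:chs2} and~\ref{prop:revs-char} is stated correctly, namely that $\revl{\mathbb{F}_2}{\Sigma}$ is precisely the Boolean-semiring symmetric-difference closure of $\orevl{\mathbb{B}}{\Sigma}$. Once that dictionary is fixed, intersection follows from Proposition~\ref{prop:ahk} plus set-theoretic distributivity, and complement from the presence of $\underline{\Sigma^*}$ plus Proposition~\ref{prop:closure-sum}.
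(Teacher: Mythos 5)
Your proof is correct and follows essentially the same route as the paper: complement via adding $\underline{\Sigma^*}$, and intersection by decomposing into characteristic series of languages from $\orevl{\mathbb{B}}{\Sigma}$ (Propositions~\ref{prop:revs-char} and~\ref{prop:chs2}), distributing, and applying Proposition~\ref{prop:ahk}. Your phrasing of the intersection step as distributivity of $\cap$ over symmetric difference is just the Boolean-ring restatement of the paper's coefficient computation over $\mathbb{F}_2$.
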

\begin{proof}
Let $L \in \revl{\mathbb{F}_2}{\Sigma}$. Then $L$ is a support of some series in $\revs{\mathbb{F}_2}{\Sigma}$, which has to be the~characteristic series~$\underline{L}$ of~$L$ over~$\mathbb{F}_2$. 
Thus $\underline{L} \in \revs{\mathbb{F}_2}{\Sigma}$. Moreover, it is easy to see that the characteristic series $\underline{\Sigma^*}$ of $\Sigma^*$ over $\mathbb{F}_2$ is realised by a reversible weighted automaton over $\mathbb{F}_2$ and $\Sigma$
with one state, so~that $\underline{\Sigma^*} \in \revs{\mathbb{F}_2}{\Sigma}$. It thus follows by Proposition \ref{prop:closure-sum} that the series 
\begin{displaymath}
\underline{L} + \underline{\Sigma^*} = \sum_{w \in \Sigma^*}\left((\underline{L},w) + 1\right)\,w
\end{displaymath}   
belongs to $\revs{\mathbb{F}_2}{\Sigma}$. As a result, 
\begin{displaymath}
\supp(\underline{L} + \underline{\Sigma^*}) = \{w \in \Sigma^* \mid (\underline{L},w) + 1 \neq 0\} = \{w \in \Sigma^* \mid (\underline{L},w) = 0\} = \Sigma^* \setminus L 
\end{displaymath}
belongs to $\revl{\mathbb{F}_2}{\Sigma}$, and $\revl{\mathbb{F}_2}{\Sigma}$ is closed under complementation.\goodbreak

Next, let $L,K \in \revl{\mathbb{F}_2}{\Sigma}$. Again, this means that the characteristic series $\underline{L},\underline{K}$ of these two languages over $\mathbb{F}_2$ are in~$\revs{\mathbb{F}_2}{\Sigma}$.
By~Proposition~\ref{prop:revs-char} and~Proposition~\ref{prop:chs2}, there have to exist some numbers $m,n \in \mathbb{N}$ and languages \mbox{$L_1,\ldots,L_m,K_1,\ldots,K_n \in \orevl{\mathbb{B}}{\Sigma}$} such that over $\mathbb{F}_2$,
\begin{displaymath}
\underline{L} = \underline{L_1} + \ldots + \underline{L_m} \qquad \text{ and } \qquad \underline{K} = \underline{K_1} + \ldots + \underline{K_n}. 
\end{displaymath}  
Now, for every $w \in \Sigma^*$, one has
\begin{align*}
(\underline{L \cap K}, w) & = (\underline{L_1} + \ldots + \underline{L_m}, w) \cdot (\underline{K_1} + \ldots + \underline{K_n}, w) = \\
& = \left((\underline{L_1},w) + \ldots + (\underline{L_m}, w)\right) \cdot \left((\underline{K_1},w) + \ldots + (\underline{K_n}, w)\right) = \\
& = \sum_{i = 1}^m \sum_{j = 1}^n (\underline{L_i}, w)(\underline{K_j},w) = \sum_{i = 1}^m \sum_{j = 1}^n (\underline{L_i \cap K_j}, w)    
\end{align*}
over $\mathbb{F}_2$, so that
\begin{displaymath}
\underline{L \cap K} = \sum_{i = 1}^m \sum_{j = 1}^n \underline{L_i \cap K_j}. 
\end{displaymath} 
As $L_i \cap K_j \in \orevl{\mathbb{B}}{\Sigma}$ by Proposition \ref{prop:ahk} for $i = 1,\ldots,m$ and $j = 1,\ldots,n$, it follows by Proposition \ref{prop:char-ser} that $\underline{L_i \cap K_j} \in \orevs{\mathbb{F}_2}{\Sigma}$,
and as a consequence, $\underline{L \cap K} \in \revs{\mathbb{F}_2}{\Sigma}$ by Proposition \ref{prop:revs-char}. As a result, $\supp(\underline{L \cap K}) = L \cap K$ has to be in $\revl{\mathbb{F}_2}{\Sigma}$, which is thus closed under intersection.

Finally, the closure of $\revl{\mathbb{F}_2}{\Sigma}$ under union follows by its closure under intersection and under complementation.\qed
\end{proof}

As a first step towards a characterisation of the class $\revlc{\mathbb{F}_2}$ of reversible languages over $\mathbb{F}_2$, let us prove that this class contains
all reversible languages in the usual sense -- that is, all languages from $\revlc{\mathbb{B}}$.

\begin{lemma}
\label{lem:ma}
Let $\Sigma$ be an alphabet. Then $\revl{\mathbb{B}}{\Sigma} \subseteq \revl{\mathbb{F}_2}{\Sigma}$.
\end{lemma}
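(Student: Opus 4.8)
The plan is to reduce the general reversible language over $\mathbb{B}$ to the one‑initial‑state case and then invoke Proposition~\ref{prop:char-ser} together with the closure properties already established for $\revl{\mathbb{F}_2}{\Sigma}$. Concretely, let $L \in \revl{\mathbb{B}}{\Sigma}$. By definition $L = \supp(r)$ for some $r \in \revs{\mathbb{B}}{\Sigma}$, i.e. $L = \|\mathcal{A}\|$ for a reversible finite automaton $\mathcal{A} = (Q,\rightarrow,I,F)$ in the sense of J.-\'E.~Pin. First I would split off the initial states exactly as in the proof of Proposition~\ref{prop:revs-char}: for each $i \in I$, let $\mathcal{A}_i = (Q,\rightarrow,\{i\},F)$; this is a reversible finite automaton with a single initial state, so $L_i := \|\mathcal{A}_i\| \in \orevl{\mathbb{B}}{\Sigma}$, and clearly $L = \bigcup_{i \in I} L_i$.

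Next I would move to the field $\mathbb{F}_2$. By Proposition~\ref{prop:char-ser} (applied with $S = \mathbb{F}_2$), each characteristic series $\underline{L_i}$ over $\mathbb{F}_2$ lies in $\orevs{\mathbb{F}_2}{\Sigma} \subseteq \revs{\mathbb{F}_2}{\Sigma}$, hence $L_i = \supp(\underline{L_i}) \in \revl{\mathbb{F}_2}{\Sigma}$. Now $L$ is a finite union of the $L_i$, and $\revl{\mathbb{F}_2}{\Sigma}$ is closed under union by Proposition~\ref{prop:clos}; therefore $L \in \revl{\mathbb{F}_2}{\Sigma}$. Since $\Sigma$ and $L$ were arbitrary, this yields $\revl{\mathbb{B}}{\Sigma} \subseteq \revl{\mathbb{F}_2}{\Sigma}$.

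There is essentially no hard step here: the only thing one must be a little careful about is that the ``finite sum of one‑initial‑state reversible series over $\mathbb{F}_2$'' translates to a \emph{union} of the underlying languages rather than a symmetric difference — this is precisely why Proposition~\ref{prop:clos} (closure under \emph{all} Boolean operations, in particular union), and not merely closure under addition, is invoked. Equivalently, one could bypass the union argument entirely by noting that $\underline{L}$ over $\mathbb{F}_2$ need not equal $\sum_i \underline{L_i}$, so one genuinely has to argue at the level of languages. The only potential obstacle is bookkeeping: making sure that the decomposition $L = \bigcup_{i\in I} L_i$ with each $L_i \in \orevl{\mathbb{B}}{\Sigma}$ is literally the one handed to us by the reversibility of $\mathcal{A}$, which it is, so the proof is short.
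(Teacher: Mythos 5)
Your proof is correct, but it takes a different route from the paper's. The paper also begins with the decomposition $L = L_1 \cup \ldots \cup L_n$ with $L_i \in \orevl{\mathbb{B}}{\Sigma}$, but then handles the union \emph{directly at the level of series}: it forms the inclusion--exclusion sum $r = \sum_{\emptyset \subsetneq X \subseteq [n]} \underline{\bigcap_{i \in X} L_i}$ over $\mathbb{F}_2$ (which lies in $\revs{\mathbb{F}_2}{\Sigma}$ by Propositions~\ref{prop:ahk}, \ref{prop:char-ser}, and~\ref{prop:closure-sum}) and verifies $(r,w) = 2^{\lvert X_w\rvert} - 1 = (\underline{L},w)$ over $\mathbb{F}_2$, so that $r = \underline{L}$. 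You instead delegate the union entirely to Proposition~\ref{prop:clos}. This is legitimate: Proposition~\ref{prop:clos} precedes Lemma~\ref{lem:ma} in the paper and its proof does not depend on it, so there is no circularity, and your argument is shorter. The trade-off is mostly one of transparency versus economy: unwinding your appeal to Proposition~\ref{prop:clos} (union via De Morgan from complementation and intersection) produces a series identity for $\underline{L}$ that is equivalent to, but less explicit than, the paper's inclusion--exclusion formula; the paper's explicit construction also foreshadows the generalisation in Lemma~\ref{le:okruhy1}, where the analogous sum with coefficients $(-2)^{\lvert X\rvert - 1}$ is used over an arbitrary nontrivial ring, a setting where the Boolean-closure shortcut is not yet available. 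You correctly identify the one genuine pitfall --- that summing the $\underline{L_i}$ over $\mathbb{F}_2$ yields a symmetric difference rather than a union --- which is exactly the issue both proofs must circumvent, each in its own way.
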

\begin{proof}
Let $L \in \revl{\mathbb{B}}{\Sigma}$. Then $L = L_1 \cup \ldots \cup L_n$ for some $n \in \mathbb{N}$ and languages $L_1,\ldots,L_n \in \orevl{\mathbb{B}}{\Sigma}$.
The series
\begin{displaymath}
r = \sum_{\emptyset \subsetneq X \subseteq [n]} \underline{\bigcap_{i \in X} L_i}
\end{displaymath}
over $\mathbb{F}_2$ then belongs to $\revs{\mathbb{F}_2}{\Sigma}$ by~virtue of~Proposition~\ref{prop:ahk}, Proposition~\ref{prop:char-ser}, and~Proposition~\ref{prop:closure-sum}.  
This means that in order to prove $L \in \revl{\mathbb{F}_2}{\Sigma}$, it is sufficient to show that $\underline{L} = r$ over $\mathbb{F}_2$. 

Given $w \in \Sigma^*$, let $X_w = \{i \in [n] \mid w \in L_i\}$. Then for every nonempty $X \subseteq [n]$, the word $w$ belongs to $\bigcap_{i \in X} L_i$ if and only if $X \subseteq X_w$.
The total number of such nonempty sets $X$ is thus $2^{\lvert X_w\rvert} - 1$, which is odd if $w \in L$ and zero otherwise. As~a~result,
\begin{align*}
(r,w) & = \left(\sum_{\emptyset \subsetneq X \subseteq [n]} \underline{\bigcap_{i \in X} L_i}, w\right) = \sum_{\emptyset \subsetneq X \subseteq [n]} \left(\underline{\bigcap_{i \in X} L_i}, w\right) = \\
& = \sum_{\substack{\emptyset \subsetneq X \subseteq X_w}} 1 = 2^{\lvert X_w\rvert} - 1 = (\underline{L},w)
\end{align*}                                                                                        
over $\mathbb{F}_2$, which establishes the desired equality $\underline{L} = r$.\qed
\end{proof} 

We are now finally ready to prove the characterisation of the class $\revlc{\mathbb{F}_2}$, which happens to be the~Boolean closure of the positive variety of reversible languages $\revlc{\mathbb{B}}$.

\begin{theorem}
\label{th:f2hlavna}
The class $\revlc{\mathbb{F}_2}$ is the Boolean closure of $\revlc{\mathbb{B}}$.
\end{theorem}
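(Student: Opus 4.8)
The plan is to prove the two inclusions separately; both turn out to be short, as the genuine work has already been done in Lemma~\ref{lem:ma} and Proposition~\ref{prop:clos}.

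First I would show that the Boolean closure of $\revlc{\mathbb{B}}$ is contained in $\revlc{\mathbb{F}_2}$. Fix an alphabet $\Sigma$. By Lemma~\ref{lem:ma} we have $\revl{\mathbb{B}}{\Sigma} \subseteq \revl{\mathbb{F}_2}{\Sigma}$, and by Proposition~\ref{prop:clos} the class $\revl{\mathbb{F}_2}{\Sigma}$ is closed under all Boolean operations. Since the Boolean closure of $\revl{\mathbb{B}}{\Sigma}$ is by definition the smallest Boolean-closed class of languages over $\Sigma$ containing $\revl{\mathbb{B}}{\Sigma}$, it follows that this Boolean closure is contained in $\revl{\mathbb{F}_2}{\Sigma}$; taking the union over all alphabets gives one inclusion of the theorem.

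For the converse inclusion I would start from an arbitrary $L \in \revl{\mathbb{F}_2}{\Sigma}$. As $\mathbb{F}_2$ is a field, every series over $\mathbb{F}_2$ with support $L$ is the characteristic series $\underline{L}$, so $\underline{L} \in \revs{\mathbb{F}_2}{\Sigma}$. By Proposition~\ref{prop:revs-char}, $\underline{L}$ is a finite sum of series from $\orevs{\mathbb{F}_2}{\Sigma}$, and by Proposition~\ref{prop:chs2} each such summand is the characteristic series $\underline{L_i}$ of some $L_i \in \orevl{\mathbb{B}}{\Sigma}$; hence $\underline{L} = \underline{L_1} + \ldots + \underline{L_m}$ over $\mathbb{F}_2$ for some $m \in \mathbb{N}$ and $L_1,\ldots,L_m \in \orevl{\mathbb{B}}{\Sigma}$. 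Evaluating supports, $L$ consists of exactly those words that lie in an odd number of the languages $L_1,\ldots,L_m$, so
\[
L = \bigcup_{\substack{X \subseteq [m] \\ \lvert X\rvert \text{ odd}}} \left( \bigcap_{i \in X} L_i \cap \bigcap_{i \in [m] \setminus X} (\Sigma^* \setminus L_i) \right),
\]
which exhibits $L$ as a Boolean combination of $L_1,\ldots,L_m$. Since $\orevl{\mathbb{B}}{\Sigma} \subseteq \revl{\mathbb{B}}{\Sigma}$, each $L_i$ is a reversible language in the usual sense, and therefore $L$ belongs to the Boolean closure of $\revl{\mathbb{B}}{\Sigma}$; ranging over $\Sigma$ completes the proof.

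I do not expect any real obstacle here: the statement is essentially a repackaging of the earlier results, the substantive content being Lemma~\ref{lem:ma} and Proposition~\ref{prop:clos}. The only point deserving a moment's care is the translation between the algebraic and the set-theoretic sides, namely that the support of a sum of characteristic series over $\mathbb{F}_2$ is the symmetric difference of the underlying languages, and that the ``odd-occurrence'' set of finitely many languages is a Boolean combination of them; this is exactly the displayed identity above.
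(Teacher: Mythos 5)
Your proof is correct and follows essentially the same route as the paper: one inclusion from Lemma~\ref{lem:ma} and Proposition~\ref{prop:clos}, the other by decomposing $\underline{L}$ via Proposition~\ref{prop:revs-char} and Proposition~\ref{prop:chs2} into characteristic series of languages from $\orevl{\mathbb{B}}{\Sigma}$ and observing that the support of the sum over $\mathbb{F}_2$ is the ``odd-occurrence'' set, hence a Boolean combination. The only cosmetic difference is that you write this Boolean combination as an explicit union over odd-sized subsets, whereas the paper obtains it by induction on the number of summands using symmetric difference.
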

\begin{proof}
The Boolean closure $\mathbf{B}(\revlc{\mathbb{B}})$ of $\revlc{\mathbb{B}}$ is included in $\revlc{\mathbb{F}_2}$ by~Lemma~\ref{lem:ma} and~Proposition~\ref{prop:clos}. 
For the converse, let $L \in \revl{\mathbb{F}_2}{\Sigma}$ for~some alphabet $\Sigma$. Proposition~\ref{prop:revs-char} gives us $n \in \mathbb{N}$ 
and~$r_1,\ldots,r_n \in \orevs{\mathbb{F}_2}{\Sigma}$ such that $L = \supp(r_1 + \ldots + r_n)$. We prove that $L$ is in $\mathbf{B}(\revlc{\mathbb{B}})$ by induction on $n$. If~$n = 0$, then $L = \supp(0) = \emptyset$ is in $\mathbf{B}(\revlc{\mathbb{B}})$.
Now, for any $k \in \mathbb{N}$, $n = k + 1$, and $L_1 := \supp(r_1 + \ldots + r_k)$ in $\mathbf{B}(\revlc{\mathbb{B}})$, we see that $L_2 := \supp(r_{k+1})$ is in~$\mathbf{B}(\revlc{\mathbb{B}})$, as it in fact belongs to $\orevl{\mathbb{B}}{\Sigma}$ by Proposition \ref{prop:chs2}.
Thus 
\begin{displaymath}
L = \supp(\left(r_1 + \ldots + r_k\right) + r_{k + 1}) = (L_1 \cup L_2) \cap \left(\Sigma^* \setminus (L_1 \cap L_2)\right)
\end{displaymath} 
belongs to $\mathbf{B}(\revlc{\mathbb{B}})$ as well.\qed
\end{proof}

The class $\revlc{\mathbb{F}_2}$ of all reversible languages over the field $\mathbb{F}_2$ thus forms a~\emph{variety of~languages} given by the~Boolean closure of the positive variety $\revlc{\mathbb{B}}$ of~the~usual reversible languages
in the sense of J.-\'E.~Pin~\cite{pin1992a}. In~other words, $\revlc{\mathbb{F}_2}$ is the \emph{variety generated by $\revlc{\mathbb{B}}$}. We thus see that introducing weights from~$\mathbb{F}_2$ to reversible finite automata
leads to describing a~larger class of languages with better closure properties. 

In fact, the Boolean closure of the positive variety of reversible languages $\revlc{\mathbb{B}}$ is known to correspond, via Eilenberg's correspondence, to the pseudovariety $\mathbf{ECom} = \mathbf{J_1} * \mathbf{G}$ of all finite monoids with commuting idempotents~\mbox{\cite{golovkins2006a,golovkins2010a,margolis1987a,ash1987a,pin1984a}}.
We thus obtain the following corollary, which can be seen both as a characterisation of the reversible languages over $\mathbb{F}_2$, as well as a new automata-theoretic characterisation of the pseudovariety $\mathbf{ECom}$. 

\begin{corollary}
Let $\Sigma$ be an alphabet a $L \subseteq \Sigma^*$ a language. Then $L$ belongs to $\revlc{\mathbb{F}_2}$ if and only if $ef = fe$ for any two idempotents $e,f$ in the syntactic monoid $M_L$ of $L$ over $\Sigma^*$.
Thus $L \in \revlc{\mathbb{F}_2}$ if and only if $M_L \in \mathbf{ECom}$. 
\end{corollary}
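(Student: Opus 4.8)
The plan is to derive the corollary by feeding Theorem~\ref{th:f2hlavna} into the Eilenberg--Pin machinery of (positive) varieties of languages. By Theorem~\ref{th:f2hlavna}, $\revlc{\mathbb{F}_2}$ equals the Boolean closure $\mathbf{B}\bigl(\revlc{\mathbb{B}}\bigr)$ of the positive variety of reversible languages. A standard fact about positive varieties is that the Boolean closure of a positive variety of languages is again a variety of languages in the sense of Eilenberg -- closure under quotients and inverse morphisms is inherited because these operations commute with the Boolean ones -- and that under Eilenberg's correspondence it is matched by the pseudovariety of finite monoids generated by the finite ordered monoids lying in the pseudovariety associated with the positive variety. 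Hence $\revlc{\mathbb{F}_2}$ is a variety of languages, and the task reduces to identifying the pseudovariety of monoids that corresponds to it.

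For this I would invoke the description of reversible languages due to J.-\'E. Pin recalled in the introduction: the positive variety $\revlc{\mathbb{B}}$ corresponds to the pseudovariety of finite ordered monoids defined by the pseudoinequalities $x^{\omega} y^{\omega} = y^{\omega} x^{\omega}$ and $x^{\omega} \le 1$. Passing from this ordered pseudovariety to the pseudovariety of monoids it generates discards the second (order-sensitive) pseudoinequality and retains $x^{\omega} y^{\omega} = y^{\omega} x^{\omega}$; since $x^{\omega}$ runs through precisely the idempotents of a finite monoid, the resulting pseudovariety is exactly $\mathbf{ECom}$, the class of finite monoids with commuting idempotents, which moreover coincides with $\mathbf{J_1} * \mathbf{G}$ \cite{golovkins2006a,golovkins2010a,margolis1987a,ash1987a,pin1984a}. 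Therefore $\revlc{\mathbb{F}_2}$ is the variety of languages corresponding to $\mathbf{ECom}$.

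It then remains to read off the syntactic statement. Eilenberg's variety theorem gives that a language $L$ belongs to the variety of languages attached to a pseudovariety $\mathbf{V}$ of finite monoids if and only if its syntactic monoid $M_L$ is finite and lies in $\mathbf{V}$. Every $L \in \revlc{\mathbb{F}_2}$ is the support of a rational series over the finite semiring $\mathbb{F}_2$ and is hence regular, so $M_L$ is a finite monoid; conversely a language outside the regular ones cannot have a finite syntactic monoid and cannot lie in $\revlc{\mathbb{F}_2}$. Combining this with the identification of the previous paragraph, $L \in \revlc{\mathbb{F}_2}$ if and only if $M_L \in \mathbf{ECom}$, which by the very definition of $\mathbf{ECom}$ is the same as requiring $ef = fe$ for all idempotents $e, f \in M_L$.

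I do not expect a genuine obstacle: the proof is an assembly of Theorem~\ref{th:f2hlavna} with results already quoted in the paragraph preceding the corollary, namely that the Boolean closure of a positive variety corresponds to the pseudovariety of monoids it generates and that, for $\revlc{\mathbb{B}}$, this pseudovariety is $\mathbf{ECom} = \mathbf{J_1} * \mathbf{G}$. The only points worth spelling out explicitly are the invocation of Eilenberg's theorem to move from the variety of languages to the syntactic condition, and the accompanying observation that membership in $\revlc{\mathbb{F}_2}$ entails regularity of $L$ -- which is what both guarantees that $M_L$ is finite and, together with decidability of membership in $\mathbf{ECom}$, underlies the decidability statement announced in the introduction.
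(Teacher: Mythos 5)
Your argument is correct and takes essentially the same route as the paper, which gives no separate proof: the corollary is read off by combining Theorem~\ref{th:f2hlavna} with the cited fact that the Boolean closure of $\revlc{\mathbb{B}}$ corresponds under Eilenberg's correspondence to $\mathbf{ECom}$, and then applying Eilenberg's variety theorem to translate membership in that variety into the condition $M_L \in \mathbf{ECom}$. One caution: your remark that passing from the ordered pseudovariety to the monoid pseudovariety it generates simply ``discards'' the pseudoinequality $x^{\omega} \le 1$ while retaining $x^{\omega}y^{\omega} = y^{\omega}x^{\omega}$ is not a valid general principle (the generated pseudovariety of an ordered pseudovariety given by pseudoinequalities is not obtained by dropping the inequalities in general); for reversible languages this identification is a nontrivial theorem resting precisely on the references you and the paper cite, so the citation rather than that heuristic should carry the weight of the argument.
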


The pseudovariety $\mathbf{ECom}$ can also be described using the pseudoidentity $x^{\omega} y^{\omega} = y^{\omega} x^{\omega}$, which directly captures the property of commuting idempotents. In~any~case, the membership of a finite
monoid to $\mathbf{ECom}$ is clearly decidable, which implies decidability of the membership of a rational language to $\revlc{\mathbb{F}_2}$. 

\section{Automata over Locally Finite Commutative Rings}
                                                                                                         
In what follows, we show that the results just obtained for reversible weighted automata over $\mathbb{F}_2$ can actually be generalised to weighted automata over any nontrivial locally finite commutative ring.

We prove $\revlc{R} = \revlc{\mathbb{F}_2}$ for any such ring $R$, and in order to~establish one of the inclusions between these two classes, we show that any characteristic series of a language
from $\revlc{\mathbb{F}_2}$ over $R$ can be realised by a reversible weighted automaton over $R$. In fact, local finiteness and commutativity of~$R$ are~not necessary here. 

\begin{lemma}
\label{le:okruhy1}
Let $R$ be a nontrivial ring and $\Sigma$ an alphabet. Then for every $L \in \revl{\mathbb{F}_2}{\Sigma}$, the characteristic series $\underline{L}$ of $L$ over $R$ is in $\revs{R}{\Sigma}$, and~$L$ is in~$\revl{R}{\Sigma}$.  
\end{lemma}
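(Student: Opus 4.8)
The plan is to combine the structural description of $\revl{\mathbb{F}_2}{\Sigma}$ coming from Proposition~\ref{prop:revs-char} and Proposition~\ref{prop:chs2} with an explicit inclusion--exclusion-type formula that expresses $\underline{L}$ over $R$ as an integer linear combination of characteristic series of languages from $\orevl{\mathbb{B}}{\Sigma}$; closure of $\revs{R}{\Sigma}$ under the relevant operations then finishes the argument.

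First I would fix $L \in \revl{\mathbb{F}_2}{\Sigma}$. By Proposition~\ref{prop:revs-char} and Proposition~\ref{prop:chs2}, there are $n \in \mathbb{N}$ and languages $L_1,\ldots,L_n \in \orevl{\mathbb{B}}{\Sigma}$ with $L = \supp(\underline{L_1} + \cdots + \underline{L_n})$ over $\mathbb{F}_2$; equivalently, writing $X_w = \{i \in [n] \mid w \in L_i\}$ for $w \in \Sigma^*$, one has $w \in L$ exactly when $|X_w|$ is odd. The key step is then to consider, over $R$, the series
\begin{displaymath}
r = \sum_{\emptyset \subsetneq X \subseteq [n]} (-1)^{|X|+1}\, 2^{|X|-1}\, \underline{\bigcap_{i \in X} L_i},
\end{displaymath}
where the integer coefficients are understood via the canonical ring homomorphism $\mathbb{Z} \to R$. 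Since each $\bigcap_{i \in X} L_i$ with $\emptyset \neq X \subseteq [n]$ is a finite intersection of languages from $\orevl{\mathbb{B}}{\Sigma}$, it again lies in $\orevl{\mathbb{B}}{\Sigma}$ by Proposition~\ref{prop:ahk}, so its characteristic series over $R$ belongs to $\orevs{R}{\Sigma} \subseteq \revs{R}{\Sigma}$ by Proposition~\ref{prop:char-ser}; as $\revs{R}{\Sigma}$ is closed under addition and scalar multiplication by Proposition~\ref{prop:closure-sum}, we obtain $r \in \revs{R}{\Sigma}$.

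It remains to verify $r = \underline{L}$ over $R$. For $w \in \Sigma^*$ with $s = |X_w|$, only the subsets $X \subseteq X_w$ contribute to $(r,w)$, so $(r,w) = \sum_{k=1}^{s} \binom{s}{k}(-1)^{k+1} 2^{k-1}$ evaluated in $R$; the elementary identity $\sum_{k=1}^{s}\binom{s}{k}(-1)^{k+1}2^{k-1} = \tfrac{1}{2}\bigl(1 - (-1)^s\bigr)$ in $\mathbb{Z}$, which follows from $(1-2)^s = \sum_{k}\binom{s}{k}(-2)^k$, shows that this integer is $1$ when $s$ is odd and $0$ when $s$ is even. Hence $(r,w) = (\underline{L},w)$ in $R$ for every $w$, so $\underline{L} = r \in \revs{R}{\Sigma}$, and $L = \supp(\underline{L}) \in \revl{R}{\Sigma}$ because $R$ is nontrivial. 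I expect the only genuine subtlety to lie in this last step: one must notice that although the combinatorial identity is most easily derived by dividing by $2$, the coefficients $(-1)^{|X|+1}2^{|X|-1}$ are themselves integers, so the formula for $r$ makes sense and remains correct over an arbitrary ring in which $2$ need not be invertible — and, as promised, commutativity and local finiteness of $R$ play no role.
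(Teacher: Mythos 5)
Your proposal is correct and follows essentially the same route as the paper: the coefficient $(-1)^{|X|+1}2^{|X|-1}$ you use is exactly the paper's $(-2)^{|X|-1}$, the reversibility of $r$ is justified by the same chain of propositions, and the verification $(r,w)=\tfrac{1}{2}(1-(-1)^{|X_w|})$ is the same binomial-theorem computation, with the same (correct) observation that the identity holds among integers and hence transfers to $R$ without needing $2$ to be invertible.
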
 
\begin{proof}
Let $L \in \revl{\mathbb{F}_2}{\Sigma}$; it suffices to show that $\underline{L} \in \revs{R}{\Sigma}$ over~$R$. As~$L \in \revl{\mathbb{F}_2}{\Sigma}$, it follows by~Proposition~\ref{prop:revs-char} and~by~Proposition~\ref{prop:chs2}
that there exists some $n \in \mathbb{N}$ and languages $L_1,\ldots,L_n \in \orevl{\mathbb{B}}{\Sigma}$ such that \mbox{$L = \supp(\underline{L_1} + \ldots + \underline{L_n})$} over $\mathbb{F}_2$.
This means that $w \in \Sigma^*$ is in $L$ if~and~only~if the~set $X_w = \{i \in [n] \mid w \in L_i\}$ contains an odd number of elements. Thus the~coefficients of the characteristic series $\underline{L}$ of $L$ over $R$ at $w \in \Sigma^*$ are given by
\begin{equation}
\label{eq:1}
(\underline{L}, w) = \left\{\begin{array}{ll}1 & \text{ if $\lvert X_w\rvert$ is odd,} \\ 0 & \text{ if $\lvert X_w\rvert$ is even.} \end{array}\right.
\end{equation}

Let us now consider the series $r \in R\llangle\Sigma^*\rrangle$ defined by
\begin{displaymath}
r = \sum_{\emptyset \subsetneq X \subseteq [n]} (-2)^{\lvert X\rvert - 1} \underline{\bigcap_{i \in X} L_i}. 
\end{displaymath}
Then $r \in \revs{R}{\Sigma}$ by virtue of Proposition~\ref{prop:ahk}, Proposition~\ref{prop:char-ser}, Proposition~\ref{prop:revs-char}, and~Proposition~\ref{prop:closure-sum}.
In order to prove that $\underline{L} \in \revs{R}{\Sigma}$, we now show that actually $r = \underline{L}$. Indeed, for $\emptyset \subsetneq X \subseteq [n]$ fixed and any $w \in \Sigma^*$, we have
$\left(\underline{\bigcap_{i \in X} L_i},w\right) = 1$ if $X \subseteq X_w$ and $\left(\underline{\bigcap_{i \in X} L_i},w\right) = 0$ otherwise,
while there are exactly ${\lvert X_w\rvert \choose k}$ sets $X \subseteq X_w$ of each size $k = 1,\ldots,\lvert X_w\rvert$. As~a~result,
\begin{displaymath}
(r, w) = \sum_{k = 1}^{\lvert X_w\rvert}{\lvert X_w\rvert \choose k} (-2)^{k - 1}
\end{displaymath} 
for each $w \in \Sigma^*$ over $R$. However, over $\mathbb{Q}$ it holds that
\begin{displaymath}
\sum_{k = 1}^{\lvert X_w\rvert}{\lvert X_w\rvert \choose k} (-2)^{k - 1} = -\frac{1}{2}\left(\sum_{k = 0}^{\lvert X_w\rvert}{\lvert X_w\rvert \choose k} (-2)^{k} - 1\right) = \frac{1 - (-1)^{\lvert X_w\rvert}}{2}
\end{displaymath}
by the Binomial theorem, which means that also over $R$ we have 
\begin{displaymath}
(r, w) = \sum_{k = 1}^{\lvert X_w\rvert}{\lvert X_w\rvert \choose k} (-2)^{k - 1} = \left\{\begin{array}{ll}1 & \text{ if $\lvert X_w\rvert$ is odd,} \\ 0 & \text{ if $\lvert X_w\rvert$ is even.} \end{array}\right.
\end{displaymath}
Thus $r = \underline{L}$ by (\ref{eq:1}).\qed\goodbreak
\end{proof}

We now essentially establish the remaining inclusion in case the nontrivial ring $R$ is \emph{locally finite} and \emph{commutative}. 

\begin{lemma}
\label{le:okruhy2}
Let $R$ be a nontrivial locally finite commutative ring, $\Sigma$ an~alphabet, \mbox{$L \in \revl{R}{\Sigma}$} a language, and $M_L$ the syntactic monoid of $L$ over $\Sigma^*$. 
Then the idempotents of $M_L$ commute, i.e., $M_L \in \mathbf{ECom}$.
\end{lemma}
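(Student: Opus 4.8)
The plan is to show that any language $L$ accepted in the ``parity mode'' by a reversible weighted automaton over a locally finite commutative ring $R$ has a syntactic monoid with commuting idempotents. The key idea is to pass from the automaton to a finite monoid of matrices that computes $L$, and to exploit reversibility to deduce a structural constraint on that monoid strong enough to force idempotents to commute.

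First I would take a reversible weighted automaton $\mathcal{A} = (Q,\sigma,\iota,\tau)$ over $R$ with $\|\mathcal{A}\| = \underline{L}$ and pass to its linear representation $(n,\mathbf{i},\mu,\mathbf{f})$. Since $R$ is locally finite and $\mathcal{A}$ uses only finitely many ring elements, all entries of all matrices $\mu(w)$ lie in a finite subring $R_0 \subseteq R$; hence $T = \mu(\Sigma^*)$ is a \emph{finite} monoid of matrices over $R_0$. The crucial observation from reversibility is that each $\mu(a)$ has at most one nonzero entry in every row and in every column, i.e.\ $\mu(a)$ is a ``weighted partial permutation matrix''. This property is preserved under multiplication, so every matrix in $T$ is a weighted partial permutation matrix: its support (the $0$/$1$ pattern of nonzero entries) is the graph of a partial injection $[n] \rightharpoonup [n]$. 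Thus the map sending $M \in T$ to the partial injection $\supp(M)$ is a monoid homomorphism from $T$ into the \emph{symmetric inverse monoid} $\mathcal{I}_n$, and moreover $T$ itself embeds into the wreath-product-like structure of scalars over partial injections.

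The next step is to analyse idempotents. Let $E \in T$ be idempotent. Then $\supp(E)$ is an idempotent partial injection in $\mathcal{I}_n$, hence a partial identity on some subset $D \subseteq [n]$; so $E$ is a diagonal matrix supported on $D$ whose diagonal entries are idempotents of $R_0$, and since $R_0$ is commutative these diagonal scalars are central idempotents. Consequently any two idempotents $E, F \in T$ are both diagonal matrices with central-idempotent entries, and diagonal matrices commute: $EF = FE$. So the idempotents of $T$ commute, i.e.\ $T \in \mathbf{ECom}$. Since $\mathbf{ECom}$ is a pseudovariety (closed under submonoids, quotients, and finite products), and $L$ is recognised by $T$ via $\mathbf{i}(-)\mathbf{f}$ in the sense that $w \in L \iff \mathbf{i}\mu(w)\mathbf{f} \neq 0$ — more carefully, $M_L$ is a quotient of (a submonoid of) a monoid built from $T$ — we conclude $M_L \in \mathbf{ECom}$. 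To make the last link precise I would note that the syntactic congruence of $L$ is coarser than the congruence $w \sim w'$ iff $\mu(w) = \mu(w')$, so $M_L$ is a quotient of $T$, and apply closure of $\mathbf{ECom}$ under quotients.

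The main obstacle I anticipate is the step asserting that reversibility forces $\mu(a)$ to be a weighted partial permutation matrix and, more delicately, that idempotents of $R_0$ appearing on diagonals are central — the commutativity hypothesis on $R$ is used precisely here, and one should check it is genuinely needed (the earlier sections only used commutativity of $R$ in Lemma~\ref{le:okruhy1} indirectly). A secondary subtlety is the passage from ``$T$ recognises the series'' to ``$M_L$ divides $T$'': one must be careful that recognising the series $\underline{L}$ (an equality of $R$-valued functions) is at least as fine as recognising the language $L$, which holds because $(\underline{L},w) = (\underline{L},w')$ whenever $\mu(w)=\mu(w')$, and $\underline{L}$ determines $L$ as its support. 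Once these points are nailed down, the argument is essentially the observation that reversible automata over a commutative ring compute inside a monoid of ``weighted partial injections'', whose idempotents are diagonal and hence commute.
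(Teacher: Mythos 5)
Your proposal is correct and follows essentially the same route as the paper: reversibility forces each $\mu(a)$, and hence each $\mu(w)$, to have at most one nonzero entry per row and per column; idempotents in the resulting finite matrix monoid must then be diagonal; diagonal matrices over a commutative ring commute; and $M_L$ divides this finite monoid (the paper packages the last step as a deterministic automaton on state set $R^n$, you as a quotient via the kernel of $\mu$ — an immaterial difference).

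One step deserves repair. The claim that $M \mapsto \supp(M)$ is a monoid homomorphism into the symmetric inverse monoid $\mathcal{I}_n$ is false for rings with zero divisors, which the lemma must cover (e.g.\ $\mathbb{Z}/4\mathbb{Z}$): a product of two nonzero entries may vanish, so one only has the containment $\supp(MN) \subseteq \supp(M) \circ \supp(N)$ of graphs of partial injections. Your conclusion survives: if $E = E^2$ and $\supp(E)$ is the graph of a partial injection $f$, the containment gives $f \subseteq f^2$, and injectivity then forces $f$ to be a partial identity (if $f(i) = j \neq i$ then $f(j) = j = f(i)$, contradicting injectivity), so $E$ is diagonal. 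The paper's proof makes exactly this kind of direct argument, counting nonzero entries in the $j$-th column, rather than invoking a homomorphism to $\mathcal{I}_n$. A second, cosmetic point: $L \in \revl{R}{\Sigma}$ only guarantees $L = \supp(\|\mathcal{A}\|)$ for some reversible $\mathcal{A}$, not $\|\mathcal{A}\| = \underline{L}$; your argument only uses the former, so nothing is lost.
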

\begin{proof}
As $\mathbf{ECom}$ is a pseudovariety of finite monoids and $M_L$ divides the transition monoid of any deterministic finite automaton recognising $L$, it suffices to~describe 
\emph{some} deterministic finite automaton $\mathcal{D}$ over $\Sigma$ recognising $L$ such that the idempotents in the transition monoid of $\mathcal{D}$ commute.

Since $L \in \revl{R}{\Sigma}$, there has to be some reversible weighted automaton~$\mathcal{A}$ over $R$ and $\Sigma$ with state set $[n]$ for some $n \in \mathbb{N}$ 
such that $L = \supp(\|\mathcal{A}\|)$; let~$\mathcal{P}_{\mathcal{A}} = (n,\mathbf{i},\mu,\mathbf{f})$ be the corresponding linear representation. For $k = 1,\ldots,n$, let $\mathbf{e}_k$ denote the vector $\mathbf{e}_k = (a_1,\ldots,a_n) \in R^n$ such that $a_k = 1$ and $a_j = 0$ for all $j \in [n] \setminus \{k\}$. 
As $\mathcal{A}$ is always an automaton over some finitely generated subring of $R$, we may assume that $R$ is actually finite.

We may thus take $\mathcal{D} = (R^n,\cdot,\mathbf{i},F)$, where $F = \{\mathbf{v} \in R^n \mid \mathbf{v}\mathbf{f} \neq 0\}$, and~\mbox{$\mathbf{v} \cdot a = \mathbf{v} \mu(a)$} for all $\mathbf{v} \in R^n$ and $a \in \Sigma$.
The automaton $\mathcal{D}$ then clearly recognises the language $L = \supp(\|\mathcal{A}\|)$ and its transition monoid can be represented as~a~monoid of all matrices $\mu(w)$ for $w \in \Sigma^*$ with matrix multiplication over $R$.
By reversibility of $\mathcal{A}$, each of the matrices $\mu(a)$ for $a \in \Sigma$ contains at~most one nonzero element in each row and column. The same property thus holds for~all matrices $\mu(w)$ for $w \in \Sigma^*$.
This means that if $\mu(w) = (a_{i,j})_{n \times n}$ is an~idempotent, then $\mu(w)$ is a diagonal matrix, as $a_{i,j} \neq 0$ for some $i \neq j$ would imply $\mathbf{e}_i \mu(w) = a_{i,j} \mathbf{e}_j$ and
$\mathbf{e}_i \mu(w)^2 = a_{i,j} \mathbf{e}_j \mu(w) \neq a_{i,j} \mathbf{e}_j$, as otherwise $a_{j,j}$ would have to be nonzero and the $j$-th column of $\mu(w)$ would contain two nonzero elements.
This would mean $\mu(w)^2 \neq \mu(w)$ and $\mu(w)$ would not be idempotent. Any idempotent $\mu(w)$ is thus indeed a diagonal matrix, while diagonal matrices over commutative rings commute under matrix multiplication.\qed   
\end{proof}

\begin{theorem}
\label{th:okruhy-hlavna}
The class $\revlc{R}$ is the Boolean closure of $\revlc{\mathbb{B}}$ for every nontrivial locally finite commutative ring $R$.
\end{theorem}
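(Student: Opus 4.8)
The plan is to prove the slightly stronger equality $\revlc{R} = \revlc{\mathbb{F}_2}$ and then appeal to Theorem~\ref{th:f2hlavna} to rephrase the right-hand side as the Boolean closure of $\revlc{\mathbb{B}}$. Both inclusions have in effect already been prepared by the two preceding lemmas, so the argument will be short.

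For the inclusion $\revlc{\mathbb{F}_2} \subseteq \revlc{R}$ I would simply quote Lemma~\ref{le:okruhy1}: every $L \in \revl{\mathbb{F}_2}{\Sigma}$ has its characteristic series over $R$ realised by a reversible weighted automaton over $R$, hence $L \in \revl{R}{\Sigma}$. Note this direction uses neither local finiteness nor commutativity of $R$, only that $R$ is a nontrivial ring.

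For the converse inclusion $\revlc{R} \subseteq \revlc{\mathbb{F}_2}$ I would take an arbitrary $L \in \revl{R}{\Sigma}$ and invoke Lemma~\ref{le:okruhy2} (this is where nontriviality, local finiteness, and commutativity of $R$ are needed): the syntactic monoid $M_L$ of $L$ lies in $\mathbf{ECom}$. In particular $M_L$ is finite, so $L$ is a rational language, and the Corollary following Theorem~\ref{th:f2hlavna} — which states that a language belongs to $\revlc{\mathbb{F}_2}$ exactly when the idempotents of its syntactic monoid commute, i.e.\ when $M_L \in \mathbf{ECom}$ — yields $L \in \revlc{\mathbb{F}_2}$ at once.

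Putting the two inclusions together gives $\revlc{R} = \revlc{\mathbb{F}_2}$, and since Theorem~\ref{th:f2hlavna} identifies $\revlc{\mathbb{F}_2}$ with $\mathbf{B}(\revlc{\mathbb{B}})$, the claim follows. There is no remaining obstacle: the substantive work has been localised in Lemmas~\ref{le:okruhy1} and~\ref{le:okruhy2} and in the $\mathbb{F}_2$ case. The only point worth a moment's care is that the application of the Corollary presupposes $L$ to be rational (equivalently, $M_L$ finite), which is exactly what Lemma~\ref{le:okruhy2} supplies via membership in the pseudovariety $\mathbf{ECom}$.
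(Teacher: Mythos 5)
Your proof is correct and follows essentially the same route as the paper: the paper's own (one-line) proof likewise combines Lemma~\ref{le:okruhy1} for the inclusion $\revlc{\mathbb{F}_2} \subseteq \revlc{R}$, Lemma~\ref{le:okruhy2} together with the correspondence between $\mathbf{B}(\revlc{\mathbb{B}})$ and $\mathbf{ECom}$ for the converse, and Theorem~\ref{th:f2hlavna} to identify $\revlc{\mathbb{F}_2}$ with that Boolean closure. Your added remark that $M_L \in \mathbf{ECom}$ guarantees finiteness of $M_L$, hence rationality of $L$, is a correct and welcome clarification of a point the paper leaves implicit.
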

\begin{proof}
Follows by Lemma \ref{le:okruhy1}, Lemma \ref{le:okruhy2}, and Theorem \ref{th:f2hlavna} coupled with the~correspondence of the Boolean closure of $\revlc{\mathbb{B}}$ to $\mathbf{ECom}$.\qed 
\end{proof}

\begin{proposition}
Let $R$ be a nontrivial finite commutative ring, $\Sigma$ an alphabet, and~$r \in R\llangle\Sigma^*\rrangle$ a~series rational over $R$. Then $r \in \revs{R}{\Sigma}$ if and only if $\supp(r + x \cdot \underline{\Sigma^*}) \in \revl{R}{\Sigma}$ for all $x \in R$.  
\end{proposition}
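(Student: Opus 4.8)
The plan is to prove the two implications separately, using the characterisation $\revlc{R} = \revlc{\mathbb{F}_2}$ from Theorem~\ref{th:okruhy-hlavna} (equivalently, the fact that $\revl{R}{\Sigma}$ is the variety of languages whose syntactic monoid lies in $\mathbf{ECom}$), together with the closure properties of $\revs{R}{\Sigma}$ from Proposition~\ref{prop:closure-sum} and Lemma~\ref{le:okruhy1}.

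For the forward implication, suppose $r \in \revs{R}{\Sigma}$. Fix $x \in R$. The series $x \cdot \underline{\Sigma^*}$ is obtained from the one-state reversible automaton realising $\underline{\Sigma^*}$ by multiplying its initial (or final) weight by $x$, so $x \cdot \underline{\Sigma^*} \in \revs{R}{\Sigma}$ by Proposition~\ref{prop:closure-sum}; by the same proposition (closure under addition) we get $r + x\cdot\underline{\Sigma^*} \in \revs{R}{\Sigma}$, hence its support lies in $\revl{R}{\Sigma}$ by definition. This direction is essentially immediate.

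For the converse, assume $\supp(r + x\cdot\underline{\Sigma^*}) \in \revl{R}{\Sigma}$ for every $x \in R$; I want to conclude $r \in \revs{R}{\Sigma}$. The idea is to reconstruct $r$ from the finitely many supports $L_x := \supp(r + x\cdot\underline{\Sigma^*})$, $x \in R$. For a word $w$, the value $(r,w) = c$ lies in the (finite) ring $R$, and $w \in L_x$ precisely when $c + x \neq 0$, i.e.\ when $x \neq -c$; so $w \notin L_{x}$ for exactly one value of $x$, namely $x = -c$, and this value pins down $(r,w)$. Concretely, for each $x \in R$ let $K_x := \Sigma^* \setminus L_{-x} = \{\,w \mid (r,w) = x\,\}$; these languages partition $\Sigma^*$, each $K_x \in \revl{R}{\Sigma}$ by closure of $\revl{R}{\Sigma}$ under complement (which holds since $\revl{R}{\Sigma} = \revlc{\mathbb{F}_2}$ restricted to $\Sigma$ is a Boolean-closed variety, cf.\ Proposition~\ref{prop:clos} and Theorem~\ref{th:okruhy-hlavna}), and by construction
\begin{displaymath}
r = \sum_{x \in R} x \cdot \underline{K_x}
\end{displaymath}
is a finite sum. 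Now $\underline{K_x} \in \revs{R}{\Sigma}$ by Lemma~\ref{le:okruhy1} (applied via $\revl{R}{\Sigma} = \revl{\mathbb{F}_2}{\Sigma}$, so each $K_x \in \revl{\mathbb{F}_2}{\Sigma}$), hence $x \cdot \underline{K_x} \in \revs{R}{\Sigma}$ and finally $r \in \revs{R}{\Sigma}$, both by Proposition~\ref{prop:closure-sum}.

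The main obstacle — really the only non-formal point — is making sure the membership $K_x \in \revl{R}{\Sigma}$ is correctly justified: $K_x$ is the complement of $L_{-x}$, so one needs that $\revl{R}{\Sigma}$ is closed under complementation, and this is where finiteness of $R$ (so that the sum over $x \in R$ is finite) and the full strength of the earlier structural results are used; one also has to note that $r$ being rational guarantees each coefficient lies in the subring generated by the finitely many matrix entries, but since $R$ is assumed finite this is automatic. Everything else is bookkeeping with the definition of support and the already-established closure properties.
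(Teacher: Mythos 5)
Your proposal is correct and follows essentially the same route as the paper: the forward direction via closure under addition and scalar multiplication, and the converse by writing $r = \sum_{x \in R} x \cdot \underline{K_x}$ with $K_x = \Sigma^* \setminus \supp(r - x\cdot\underline{\Sigma^*}) = \{w \mid (r,w) = x\}$, i.e.\ as a recognisable step function, then applying Theorem~\ref{th:okruhy-hlavna}, Lemma~\ref{le:okruhy1}, and Proposition~\ref{prop:closure-sum}. The paper's proof is exactly this decomposition, only stated more tersely.
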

\begin{proof}
If $r \in \revs{R}{\Sigma}$, then $r + x \cdot \underline{\Sigma^*} \in \revs{R}{\Sigma}$ for all $x \in R$ as well, hence $\supp(r + x \cdot \underline{\Sigma^*}) \in \revl{R}{\Sigma}$. If on the other hand $\supp(r + x \cdot \underline{\Sigma^*}) \in \revl{R}{\Sigma}$ for all $x \in R$,
we see that \mbox{$r = \sum_{x \in R} x\cdot\underline{\left(\Sigma^* \setminus \supp(r - x \cdot \underline{\Sigma^*})\right)} \in \revs{R}{\Sigma}$}
by~\mbox{Theorem}~\ref{th:okruhy-hlavna}, Lemma \ref{le:okruhy1}, and Proposition \ref{prop:closure-sum} (observe that we have just expressed $r$ as a specific recognisable step function \cite{droste2007a,droste2009c}).\qed
\end{proof}

\begin{corollary}
The problem of checking reversibility of a rational series over an~effective locally finite commutative ring $R$ is decidable.
\end{corollary}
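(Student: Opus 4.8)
The plan is to reduce the question to a concretely given finite subring of $R$ and then invoke the preceding proposition. Suppose the input rational series $r$ is presented as the behaviour $\|\mathcal{A}\|$ of a weighted automaton $\mathcal{A} = (Q,\sigma,\iota,\tau)$ over $R$ (equivalently, by a linear representation). The finite set $G$ consisting of all weights $\sigma(p,a,q)$, $\iota(p)$, $\tau(q)$ occurring in $\mathcal{A}$ together with $0$ and $1$ generates, by local finiteness of $R$, a finite subring $R' = \langle G\rangle$; because $R$ is effective, a table representation of $R'$ can be computed by closing $G$ under the ring operations, the procedure terminating exactly because $R'$ turns out to be finite. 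We may assume $R$, and hence $R'$ (which contains $0 \neq 1$), nontrivial, the trivial case being degenerate. By construction $\mathcal{A}$ is a weighted automaton over $R'$, so $r$ is rational over the nontrivial finite commutative ring $R'$.

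Next I would show that $r \in \revs{R}{\Sigma}$ if and only if $r \in \revs{R'}{\Sigma}$, which is the crux. One direction is immediate, since $R' \subseteq R$ means a reversible automaton over $R'$ realising $r$ is also one over $R$. For the converse, suppose $r \in \revs{R}{\Sigma}$. For each of the finitely many $x \in R'$, the series $r + x \cdot \underline{\Sigma^*}$ is again reversible over $R$ by Proposition~\ref{prop:closure-sum} (using that $\underline{\Sigma^*}$ is realised by a one-state reversible automaton with all weights $1$), hence $\supp(r + x \cdot \underline{\Sigma^*}) \in \revl{R}{\Sigma}$. Now Theorem~\ref{th:okruhy-hlavna}, together with the correspondence of the Boolean closure of $\revlc{\mathbb{B}}$ with $\mathbf{ECom}$, gives $\revl{R}{\Sigma} = \revl{R'}{\Sigma} = \{L \subseteq \Sigma^* \mid M_L \in \mathbf{ECom}\}$; in particular $\supp(r + x \cdot \underline{\Sigma^*}) \in \revl{R'}{\Sigma}$ for every $x \in R'$. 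Since $r$ is rational over the nontrivial finite commutative ring $R'$, the preceding proposition then yields $r \in \revs{R'}{\Sigma}$, as desired.

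It remains to observe that the condition ``$r \in \revs{R'}{\Sigma}$'' is decidable, and here everything is effective because $R'$ has been made explicit. By the preceding proposition it suffices, for each $x$ in the finite set $R'$, to decide whether $\supp(r + x \cdot \underline{\Sigma^*})$ belongs to $\revl{R'}{\Sigma}$. A linear representation of $r + x \cdot \underline{\Sigma^*}$ over $R'$ is obtained from one of $r$ by a disjoint union with a one-state automaton for $x \cdot \underline{\Sigma^*}$; its support is recognised by the deterministic finite automaton on the reachable part of the state set $(R')^{n}$ with transitions $\mathbf{v} \mapsto \mathbf{v}\mu(a)$ and final states $\{\mathbf{v} \mid \mathbf{v}\mathbf{f} \neq 0\}$, exactly the automaton $\mathcal{D}$ from the proof of Lemma~\ref{le:okruhy2}, which is computable since $R'$ is finite. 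From this automaton one computes the syntactic monoid of the support and tests whether its idempotents commute pairwise; by the equality $\revl{R'}{\Sigma} = \{L \mid M_L \in \mathbf{ECom}\}$ just used, this decides membership in $\revl{R'}{\Sigma}$. The algorithm accepts if and only if all these tests succeed.

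The main obstacle is the middle step: reversibility is a property relative to the ambient ring, so one must argue that descending to the computable finite subring $R'$ does not affect the answer. This is precisely where the results of the present section do the work — via Theorem~\ref{th:okruhy-hlavna} the class of reversible languages is the same over $R$ and over $R'$, and via the preceding proposition a rational series is reconstructed over $R'$ from the finitely many $\mathbf{ECom}$-supports it determines. Everything else — computing the finite subring generated by the weights from an effective presentation, forming $r + x \cdot \underline{\Sigma^*}$, computing supports and syntactic monoids of rational languages, and checking commutation of idempotents in a finite monoid — is routine.
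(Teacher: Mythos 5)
Your proof is correct and follows essentially the route the paper intends: reduce to the computable finite subring generated by the automaton's weights, apply the preceding proposition to replace reversibility of $r$ by finitely many membership tests $\supp(r + x \cdot \underline{\Sigma^*}) \in \revl{R'}{\Sigma}$, and decide each of these via Theorem~\ref{th:okruhy-hlavna} by checking that the idempotents of the syntactic monoid of the (effectively computable) support commute. The care you take with the step that reversibility over $R$ and over the finite subring $R'$ coincide is a detail the paper leaves implicit, but your argument for it is sound.
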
\goodbreak

%
%

\bibliographystyle{splncs04}
\bibliography{references}

\end{document}